\newif\ifarxiv
\newtheorem{theorem}{Theorem}
\newtheorem{lemma}[theorem]{Lemma}
\theoremstyle{definition}
\newtheorem{definition}{Definition}
\theoremstyle{remark}	
\begin{document}
\title{Efficient and Simple Algorithms for Fault-Tolerant Spanners}
\ifarxiv
\author{Michael Dinitz\thanks{Supported in part by NSF award CCF-1909111}\\Johns Hopkins University \and Caleb Robelle\\University of Maryland, Baltimore County}
\date{}
\else
\author{Michael Dinitz} 
\email{mdinitz@cs.jhu.edu}
\affiliation{%
    \institution{Johns Hopkins University}
    \city{Baltimore}
    \state{Maryland}
    \country{USA}
}
\authornote{Supported in part by NSF award CCF-1909111}
\author{Caleb Robelle}
\email{carobel1@umbc.edu}
\affiliation{%
    \institution{University of Maryland, Baltimore County}
    \city{Baltimore}
    \state{Maryland}
    \country{USA}
}
\fi

\ifarxiv
\maketitle
\fi
\begin{abstract}
    It was recently shown that a version of the greedy algorithm gives a construction of fault-tolerant spanners that is size-optimal, at least for vertex faults.  However, the algorithm to construct this spanner is not polynomial-time, and the best-known polynomial time algorithm is significantly suboptimal.  Designing a polynomial-time algorithm to construct (near-)optimal fault-tolerant spanners was given as an explicit open problem in the two most recent papers on fault-tolerant spanners ([Bodwin, Dinitz, Parter, Vassilevka Williams SODA '18] and [Bodwin, Patel PODC '19]).  We give a surprisingly simple algorithm which runs in polynomial time and constructs fault-tolerant spanners that are extremely close to optimal (off by only a linear factor in the stretch) by modifying the greedy algorithm to run in polynomial time.  To complement this result, we also give simple distributed constructions in both the LOCAL and CONGEST models.
\end{abstract}

\ifarxiv \else
\maketitle
\fi

\section{Introduction}

Let $G = (V, E)$ be a graph, possibly with edge lengths $w : E \rightarrow \mathbb{R}_{\geq 0}$. A $t$-spanner of $G$, for $t \geq 1$, is a subgraph $G' = (V, E')$ that preserves all pairwise distances within factor $t$, i.e.,
\begin{equation} \label{eq:stretch}
d_{G'}(u,v) \leq t \cdot d_G(u,v)
\end{equation}
for all $u,v \in V$ (where $d_H$ denotes the shortest-path distance in a graph $H$).  The distance preservation factor $t$ is called the \emph{stretch} of the spanner.  Less formally, graph spanners are a form of sparsifiers that approximately preserve distances (as opposed to other notions of graph sparsification which approximately preserve cuts~\cite{BenczurK:15}, the spectrum~\cite{SpielmanS11,BatsonSS14}, or other graph properties). When considering spanners through the lens of sparsification, perhaps the most important goal in the study of graph spanners is understanding the tradeoff between the stretch and the sparsity.  The main result in this area, which is tight assuming the ``Erd\H{o}s girth conjecture''~\cite{erdHos1964extremal}, was given by Alth\"ofer et al.:
\begin{theorem}[\cite{AlthoferDDJS:93}] \label{thm:ADDJS}
For every positive integer $k$, every weighted graph $G = (V, E)$ has a $(2k-1)$-spanner with at most $O(n^{1+1/k})$ edges.
\end{theorem}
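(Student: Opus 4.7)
The plan is to analyze the classical \emph{greedy spanner} algorithm. Sort the edges of $G$ in non-decreasing order of weight, initialize $G' = (V, \emptyset)$, and then process the edges in order: for each edge $(u,v)$ with weight $w(u,v)$, add $(u,v)$ to $G'$ if and only if the current distance $d_{G'}(u,v) > (2k-1)\cdot w(u,v)$. I will show that the resulting $G'$ is a $(2k-1)$-spanner with at most $O(n^{1+1/k})$ edges.

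For the stretch bound, I would argue that \eqref{eq:stretch} holds for every single edge $(u,v) \in E$, which by summing along shortest paths implies the bound for all pairs. If $(u,v) \in E'$, the bound is trivial. Otherwise, at the time $(u,v)$ was rejected, we had $d_{G'}(u,v) \leq (2k-1)\cdot w(u,v)$, and since $G'$ only grows, this inequality persists.

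For the sparsity bound, the key structural step is to show that $G'$ has girth strictly greater than $2k$, i.e., contains no cycle of length at most $2k$. Suppose for contradiction that $C$ is such a cycle and let $(u,v)$ be the heaviest edge on $C$. Because edges were processed in non-decreasing weight order, every other edge of $C$ was already in $G'$ when $(u,v)$ was considered. The remaining $\leq 2k-1$ edges of $C$ form a $u$--$v$ path in $G'$ of total weight at most $(2k-1)\cdot w(u,v)$, so the greedy test would have rejected $(u,v)$, a contradiction. Once girth $\geq 2k+1$ is established, I invoke the Moore bound: any $n$-vertex graph of girth at least $2k+1$ has at most $\tfrac12 n^{1+1/k} + \tfrac{n}{2}$ edges, proved by a standard BFS argument showing that balls of radius $k$ around any vertex are trees, so the average degree $d$ must satisfy $1 + d + d(d-1) + \cdots + d(d-1)^{k-1} \leq n$, forcing $d = O(n^{1/k})$.

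I expect the main obstacle to be the Moore bound derivation, since the stretch analysis and the girth argument are short and essentially mechanical once the algorithm is fixed. In particular, care is needed to make the BFS/ball-counting argument work for the average degree (not just the minimum degree), which is what is needed to bound $|E'|$. Everything else, including the observation that the argument is oblivious to edge weights and therefore applies to arbitrary weighted graphs, follows directly from the greedy construction.
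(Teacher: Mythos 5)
The paper does not prove Theorem~\ref{thm:ADDJS}; it is cited from Alth\"ofer et al.~\cite{AlthoferDDJS:93}, and the paper explicitly describes the girth-larger-than-$2k$ plus Moore-bound argument as ``the key fact used in the original non-fault-tolerant greedy algorithm analysis.'' Your proposal is precisely that standard argument (greedy construction sorted by weight, per-edge stretch, girth $> 2k$ via the heaviest-edge-on-a-short-cycle contradiction, then the Moore bound applied after passing to a subgraph of minimum degree at least half the average), and it is correct; the one small point to make precise is that when cycle edges tie in weight you should pick the \emph{last-processed} edge of the cycle rather than merely a heaviest one, which works because the processing order is nondecreasing in weight.
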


This notion of graph spanners was first introduced by Peleg and Sch\"affer~\cite{PS89} and Peleg and Ullman~\cite{PU89} in the context of distributed computing, and has been studied extensively for the last three decades in the distributed computing community as well as more broadly.  Spanners are not only inherently interesting mathematical objects, but they also have an enormous number of applications.  A small sampling includes uses in distance oracles~\cite{ThorupZ:05}, property testing~\cite{BGJRW09,BBGRWY14}, synchronizers~\cite{PU89}, compact routing~\cite{ThorupZ:01}, preprocessing for approximation algorithms~\cite{BKM09,DKN17}), and many others.  

Many of these applications, particularly in distributed computing, arise from modeling computer networks or distributed systems as graphs. But one aspect of distributed systems that is not captured by the above spanner definition is the possibility of \emph{failures}. We would like our spanner to be robust to failures, so that even if some nodes fail we still have a spanner of what remains. More formally, $G'$ is an $f$-(vertex-)fault-tolerant $t$-spanner of $G$ if for every set $F \subseteq V$ with $|F| \leq f$ the spanner condition holds for $G \setminus F$, i.e., 
\[
d_{G' \setminus F}(u,v) \leq t \cdot d_{G \setminus F}(u,v)
\]
for all $u,v \in V \setminus F$.  If $F$ is instead an edge set then this gives a definition of an $f$-edge-fault-tolerant $t$-spanner.

This notion of fault-tolerant spanners was first introduced by Levcopoulos, Narasimhan, and Smid~\cite{LNS98} in the context of geometric spanners (the special case when the vertices are in Euclidean space and the distance between two points is the Euclidean distance), and has since been studied extensively in that setting~\cite{LNS98,lukovszki1999new,czumaj2004fault,NS07}.  Note that in the geometric setting $d_{G \setminus F}(u,v) = d_G(u,v)$ for all $u,v \in V \setminus F$, since faults do not change the underlying geometric distances.  

In general graphs, though, $d_{G \setminus F}(u,v)$ may be extremely different from $d_G(u,v)$, making this definition more difficult to work with.  The first results on fault-tolerant graph spanners were by Chechik, Langberg, Peleg, and Roditty~\cite{ChechikLPR:10}, who showed how to modify the Thorup-Zwick spanner~\cite{ThorupZ:05} to be $f$-fault-tolerant with an additional cost of approximately $k^f$: the number of edges in the $f$-fault-tolerant $(2k-1)$-spanner that they create is approximately $\tilde O(k^f n^{1+1/k})$ (where $\tilde O$ hides polylogarithmic factors).  Since~\cite{ChechikLPR:10} there has been a significant amount of work on improving the sparsity, particularly as a function of the number of faults $f$ (since we would like to protect against large numbers of faults but usually care most about small stretch values).  First, Dinitz and Krauthgamer~\cite{DinitzK:11} improved the size to $\tilde O(f^{2-1/k} n^{1+1/k})$ by giving a black-box reduction to the traditional non-fault-tolerant setting.  Then Bodwin, Dinitz, Parter, and Vassilevska Williams~\cite{BDPW18} decreased this to $O(\exp(k) f^{1-1/k} n^{1+1/k})$, which they also showed was optimal (for vertex faults) as a function of $f$ and $n$ (i.e., the only non-optimal dependence was the $\exp(k)$).  Unlike previous fault-tolerant spanner constructions, this optimal construction was based off of a natural greedy algorithm (the natural generalization of the greedy algorithm of~\cite{AlthoferDDJS:93}).  An improved analysis of the same greedy algorithm was then given by Bodwin and Patel~\cite{BP19}, who managed to show the fully optimal bound of $O(f^{1-1/k} n^{1+1/k})$.

Unlike the previous fault-tolerant spanner construction of~\cite{DinitzK:11} and the greedy non-fault-tolerant algorithm of~\cite{AlthoferDDJS:93}, the greedy algorithm of~\cite{BDPW18,BP19} has a significant weakness: it takes exponential time.  Obtaining the same (or similar) size bound in \emph{polynomial} time was explicitly mentioned as an important open question in both~\cite{BDPW18} and~\cite{BP19}.    

\subsection{Our Results and Techniques}
In this paper we design a surprisingly simple algorithm to construct nearly-optimal fault-tolerant spanners in polynomial time, in both unweighted and weighted graphs.
\begin{theorem} \label{thm:main}
There is a polynomial time algorithm which, given integers $k \geq 1$ and $f \geq 1$ and a (weighted) graph $G = (V, E)$ with $|V| = n$ and $|E|=m$, constructs an $f$-fault-tolerant $(2k-1)$-spanner with at most $O\left(k f^{1-1/k} n^{1+1/k}\right)$ edges in time $O(m kf^{2-1/k} n^{1+1/k})$.
\end{theorem}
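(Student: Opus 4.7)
The plan is to construct a polynomial-time variant of the greedy fault-tolerant spanner algorithm of \cite{BDPW18,BP19}. The bottleneck of the greedy algorithm is that, for each candidate edge $e = (u,v)$, it checks whether \emph{every} fault set $F$ of size at most $f$ already leaves a short $u$-$v$ path in the current spanner, which is NP-hard in general. I would replace this check by the following polynomial-time greedy heuristic: iteratively compute a shortest $u$-$v$ path $P_1$ in the current spanner $H$, delete its internal vertices, compute the next shortest path $P_2$, and so on, for up to $f+1$ rounds. If all $f+1$ paths $P_1, \ldots, P_{f+1}$ exist and each has weighted length at most $(2k-1)w(e)$, discard $e$; otherwise add it to $H$.

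The running time and stretch analyses are short. Each candidate edge triggers at most $f+1$ Dijkstra computations on a graph of size at most $|H| = O(k f^{1-1/k} n^{1+1/k})$, which matches the claimed total running time after accounting for the $m$ candidate edges. For stretch, fix any fault set $F$ with $|F| \leq f$ and any edge $(u,v) \in E$ with $u, v \notin F$: if the edge was kept, it gives a direct path of length $w(u,v)$; if it was discarded, the $f+1$ internally vertex-disjoint paths of length at most $(2k-1) w(u,v)$ witnessed in $H$ at that moment cannot all be destroyed by a fault set of size $f$ (pigeonhole), so at least one survives in $H \setminus F$. Summing along a shortest $u$-$v$ path in $G \setminus F$ extends this to all pairs of non-faulty vertices.

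The size bound is the main obstacle. The key output property is that for every edge $e = (u,v)$ added to $H$, the greedy procedure produced at most $f$ internally vertex-disjoint short paths $P_1, \ldots, P_j$ in $H_{<e}$ and then failed; letting $S_e$ denote the union of their internal vertices, either $u$ and $v$ are disconnected in $H_{<e} \setminus S_e$ or their distance there exceeds $(2k-1)w(e)$. I plan to adapt the Moore-type ball-growth analysis of \cite{BP19} using this certificate. In their setting, the cut certificate has size at most $f$; in ours, since each short path contains at most $2k-1$ edges (after a standard weight-bucketization reduction for the weighted case), the certificate has size at most $(2k-1)f = O(kf)$, which propagates as only an extra factor of $k$ in the final size bound.

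The main challenge will be verifying that this weaker ``greedy'' certificate still suffices for the Moore-bound argument of \cite{BP19} to go through with only a multiplicative $k$ loss and no additional factors of $f$ or $n$. I expect this to be robust: the certificate encodes the same kind of local expansion structure, with each unit of $f$-capacity consuming up to $2k-1$ vertices rather than a single vertex, so the neighborhood-counting step loses exactly the claimed factor of $k$. Once this is established, summing the resulting per-vertex edge bound yields $|H| = O(k f^{1-1/k} n^{1+1/k})$, completing the proof.
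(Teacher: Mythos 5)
Your core algorithmic idea matches the paper's: replace the NP-hard check in the fault-tolerant greedy algorithm with an iterative \textsc{Set Cover}/\textsc{Hitting Set} style heuristic that repeatedly finds a short $u$-$v$ path, removes its internal vertices, and stops after $f+1$ rounds, using the removed vertices as a small certificate (a length-bounded-cut approximation) to plug into the Bodwin--Patel blocking-set analysis. The stretch argument by pigeonhole on internally-disjoint paths is also essentially equivalent to the paper's ``every small cut fails to disconnect'' argument.

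However, there is a real gap in how you handle weights. You propose running weighted Dijkstra and testing whether each path has \emph{weighted} length at most $(2k-1)w(e)$. A path of weighted length $\leq (2k-1)w(e)$ in a graph whose other edges have smaller weight can contain far more than $2k-1$ edges, so the certificate $S_e$ you build for each added edge can have size much larger than $(2k-1)f$. That destroys the $(2k-1)f|E(H)|$ bound on the blocking set, which is exactly the quantity the BP19-style counting needs. You gesture at fixing this with a ``standard weight-bucketization reduction,'' but the paper does not do any bucketization (and bucketization would generically introduce $\log W$-type factors that don't appear in the claimed $O(kf^{1-1/k}n^{1+1/k})$ bound). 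The paper's actual fix is simpler and cleaner: process edges in nondecreasing weight order and run the \emph{unweighted}, hop-bounded BFS test (find paths with at most $2k-1$ \emph{hops}) ignoring weights entirely. Then every certificate path has at most $2k-1$ edges by construction, so the blocking set is automatically of size at most $(2k-1)f|E(H)|$; and because every edge of $H$ was added earlier than $\{u,v\}$ and hence has weight at most $w(u,v)$, any surviving $\leq(2k-1)$-hop path also has weighted length at most $(2k-1)w(u,v)$, giving the stretch bound. You should replace the weighted-Dijkstra-plus-bucketization step with this sort-then-run-unweighted trick; the rest of your plan then goes through as stated. (As a minor point, using BFS rather than Dijkstra also gives the $O((m+n)f)$ per-edge cost used in the claimed running time.)
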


Note that while we are a factor of $k$ away from complete optimality (for vertex faults), this is truly optimal when the stretch is constant and, for non-constant stretch values, is still significantly sparser than the analysis of the exponential time algorithm by~\cite{BDPW18} (which lost an exponential factor in $k$).

The main idea in our algorithm is to replace the exponential-time subroutine used in the greedy algorithm of~\cite{BDPW18,BP19} with an appropriate polynomial-time approximation algorithm.  More specifically, the main step of the exponential time greedy algorithm is to consider whether a given candidate edge is ``already spanned'' by the subgraph $H$ that has already been built.  This means determining whether, for some candidate edge $\{u,v\}$, there is a fault set $F$ with $|F| \leq f$ such that $d_{H \setminus F}(u,v) > (2k-1) \cdot d_{G \setminus F}(u,v)$.  If such a fault set exists then the algorithm adds $\{u,v\}$ to $H$, and otherwise does not\footnote{Note that in the fault-free case this just means checking whether there is already a path of stretch at most $(2k-1)$ between the endpoints, which is precisely the original greedy algorithm of~\cite{AlthoferDDJS:93}.}.  In both~\cite{BDPW18} and~\cite{BP19}, the only method given to find such a set $F$ was to try all possible sets, giving running time that is exponential in $f$ and thus exponential in the size of the input.

Our main approach is to speed this up by designing a polynomial-time algorithm to replace this exponential-time step.  Unfortunately, the corresponding problem (known as \textsc{Length-Bounded Cut}) is NP-hard~\cite{BEHKSS06}, so we cannot hope to actually solve it efficiently.  Instead, we design an \emph{approximation algorithm} for \textsc{Length-Bounded Cut} and use it instead.  We end up with a fairly weak approximation (basically a $k$-approximation), and one which only holds in the unweighted case.  But this turns out to be enough for the unweighted case: it intuitively allows us to build (in polynomial time) an $f$-fault-tolerant spanner with the size of a $kf$-fault-tolerant spanner, which changes the size from $O(f^{1-1/k}n^{1+1/k})$ to $O((kf)^{1-1/k} n^{1+1/k}) = O(k f^{1-1/k} n^{1+1/k})$.  However, this is only intuition.  The graph we end up creating is not necessarily even a subgraph of the $kf$-fault-tolerant spanner that the true greedy algorithm would have built, so we cannot simply argue that our algorithm returns something with at most as many edges as the greedy $kf$-fault-tolerant greedy spanner.  Instead, we need to analyze the size of our spanner from scratch.  Fortunately, we can do this by simply following the proof strategy of~\cite{BP19} with only some minor modifications.   

A natural approach to the weighted case would be to try to generalize this by creating an $O(k)$-approximation for \textsc{Length-Bounded Cut} in the weighted setting.  Such an algorithm would certainly suffice, but unfortunately we do not know how to design any nontrivial approximation algorithm for \textsc{Length-Bounded Cut} in the presence of weights.  While this might appear to rule out using a similar technique, we show that special properties of the greedy algorithm allow us to essentially reduce to the unweighted setting.  We use the weights to determine the order in which we consider edges, but for the rest of the algorithm we simply ``pretend'' to be in the unweighted setting.  Since the size bound for the unweighted case worked for any ordering, that same size bound will apply to our spanner.  And then we can use the fact that we considered edges in order of nondecreasing weights to argue that the subgraph we create is in fact an $f$-fault-tolerant $(2k-1)$-spanner even though we ignored the weights.

\subsubsection*{Distributed Settings}
While the focus of this paper is on a centralized polynomial-time algorithm since the existence of such an algorithm was an explicit open question from~\cite{BDPW18} and \cite{BP19}, we complement this result with some simple algorithms in the standard LOCAL and CONGEST models of distributed computation.  

In the LOCAL model, we can use standard network decompositions to find a clustering of the graph where the clusters have low diameter, every edge is in at least one cluster, and the clustering comes from $O(\log n)$ partitions.  Since in the LOCAL model we are allowed unbounded message sizes, this means that in $O(\log n)$ time we can send the subgraph induced by each cluster to the cluster center (an arbitrary node in the cluster), who can then locally run the greedy algorithm on that cluster and then inform the nodes in the cluster about the edges that have been chosen.  This will take only $O(\log n)$ communication rounds (since clusters have diameter $O(\log n)$) and will incur only an extra $O(\log n)$ factor in the number of edges (since the clustering can be divided into $O(\log n)$ partitions).  

In the CONGEST model we cannot apply this approach (even though we could find a similar clustering) because we are not able to gather large induced subgraphs at the cluster centers (due to the bound on message sizes).  Instead, we show that the older fault-tolerant spanner construction of~\cite{DinitzK:11} can be combined with the standard (non-fault-tolerant) spanner algorithm in the CONGEST model due to Baswana and Sen~\cite{BaswanaS:07} to give a fault-tolerant spanner algorithm in CONGEST.  This approach means that the size increases to $O(kf^{2-1/k} n^{1+1/k} \log n)$ (so we are a factor of $f \log n$ away from the bounds of the polynomial-time greedy algorithm), but the number of rounds needed is quite small despite the limitation on message sizes ($O(f^2(\log f + \log \log n) + k^2 f \log n)$ rounds).

\section{Notation and Preliminaries}
We will be discussing graphs $G = (V, E)$ where $n = |V|$ and $m = |E|$.  Sometimes these graphs will also have a weight function $w : E \rightarrow \mathbb{R}_{\geq 0}$.  We will slightly abuse notation to let $w(u,v) = w(\{u,v\})$ for all $\{u,v\} \in E$.  For a (possibly weighted) graph $G$, we will let $d_G(u,v)$ denote the length of the shortest (lowest-weight) path from $u$ to $v$ (if no such path exists then this length is $\infty$).  For any $C \subseteq V$, we let $G[C]$ denote the subgraph of $G$ induced by $C$.  For $F \subseteq V$ let $G \setminus F$ be $G[V \setminus F]$, and for $F \subseteq E$ let $G \setminus F$ be $(V, E \setminus F)$.  

\begin{definition} \label{def:FT}
Let $G = (V, E)$ be a (possibly weighted) graph.  A subgraph $H$ of $G$ is an $f$-vertex-fault-tolerant ($f$-VFT) $t$-spanner of $G$ if $d_{H \setminus F}(u,v) \leq t \cdot d_{G \setminus F}(u,v)$ for all $F \subseteq V$ with $|F| \leq f$ and $u,v \not\in F$.  A subgraph $H$ of $G$ is an $f$-edge-fault-tolerant ($f$-EFT) $t$-spanner of $G$ if $d_{H \setminus F}(u,v) \leq t \cdot d_{G \setminus F}(u,v)$ for all $F \subseteq E$ with $|F| \leq f$.
\end{definition}

Throughout this paper, for simplicity we will only discuss the vertex fault-tolerant case since that is the more difficult one to prove upper bounds for.  The proofs for the edge fault-tolerant case are essentially identical.  

We first show an equivalent definition that will let us restrict which pairs of vertices we care about.

\begin{lemma} \label{lem:simple}
Let $G = (V, E)$ be a graph with weight function $w$ and let $H$ be a subgraph of $G$.  Then $H$ is an $f$-VFT $t$-spanner of $G$ if and only if $d_{H \setminus F}(u,v) \leq t \cdot w(u,v)$ for all $F \subseteq V$ with $|F| \leq f$ and $u,v \in V \setminus F$ such that $\{u,v\} \in E$ and $d_{G \setminus F}(u,v) = w(u,v)$
\end{lemma}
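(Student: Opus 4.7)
The plan is to prove the two directions separately. The forward direction ($\Rightarrow$) is immediate from Definition~\ref{def:FT}: if $H$ is an $f$-VFT $t$-spanner, then for any $F \subseteq V$ with $|F| \leq f$ and any $u,v \in V \setminus F$ (in particular for those with $\{u,v\} \in E$ and $d_{G \setminus F}(u,v) = w(u,v)$), we have $d_{H \setminus F}(u,v) \leq t \cdot d_{G \setminus F}(u,v) = t \cdot w(u,v)$.

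For the reverse direction ($\Leftarrow$), the strategy is to reduce an arbitrary pair $(u,v)$ to a sequence of ``simple'' pairs along a shortest path, and then sum up the stretch bounds. Fix any $F \subseteq V$ with $|F| \leq f$ and any $u,v \in V \setminus F$. Take any shortest path $P = (u = x_0, x_1, \ldots, x_\ell = v)$ from $u$ to $v$ in $G \setminus F$. For each consecutive pair $(x_i, x_{i+1})$ along $P$, we have $\{x_i, x_{i+1}\} \in E$ since $P$ is a path in $G \setminus F \subseteq G$. The key observation is that $d_{G \setminus F}(x_i, x_{i+1}) = w(x_i, x_{i+1})$: the inequality $d_{G \setminus F}(x_i, x_{i+1}) \leq w(x_i, x_{i+1})$ is trivial, and if it were strict then we could splice a shorter subpath into $P$ in place of the edge $\{x_i, x_{i+1}\}$, contradicting the fact that $P$ is a shortest $u$-$v$ path in $G \setminus F$.

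Hence each consecutive pair along $P$ satisfies the hypothesis of the lemma, so $d_{H \setminus F}(x_i, x_{i+1}) \leq t \cdot w(x_i, x_{i+1})$. Applying the triangle inequality in $H \setminus F$ and summing gives
\[
d_{H \setminus F}(u,v) \;\leq\; \sum_{i=0}^{\ell-1} d_{H \setminus F}(x_i, x_{i+1}) \;\leq\; t \sum_{i=0}^{\ell-1} w(x_i, x_{i+1}) \;=\; t \cdot d_{G \setminus F}(u,v),
\]
which is the desired spanner condition for the arbitrary pair $(u,v)$ and fault set $F$.

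There is no real obstacle in this argument; it is a standard ``it suffices to check edge pairs of shortest paths'' reduction. The only subtle point worth being careful about is verifying that each edge of a shortest path in $G \setminus F$ is itself a shortest path between its endpoints in $G \setminus F$ (so that the precondition $d_{G \setminus F}(x_i, x_{i+1}) = w(x_i, x_{i+1})$ really is met), which follows from a single-step exchange argument against $P$ being shortest.
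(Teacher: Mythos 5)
Your proof is correct and follows essentially the same route as the paper's: the forward direction is immediate from the definition, and the reverse direction decomposes an arbitrary pair along a shortest path in $G \setminus F$ into consecutive edge-pairs that satisfy the lemma's hypothesis, then sums the stretch bounds via the triangle inequality. One minor advantage of your write-up is that you handle the single-edge case uniformly within the summation and explicitly justify why each edge of a shortest path realizes the distance between its endpoints, whereas the paper splits off $p=1$ as a separate (and slightly garbled) case.
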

\begin{proof}
The only if direction is immediately implied by Definition~\ref{def:FT}, since for any $F \subseteq V$ with $|F| \leq f$ and $u,v \in V \setminus F$ such that $\{u,v\} \in E$ and $d_{G \setminus F}(u,v) = w(u,v)$, we know from Definition~\ref{def:FT} that $d_{H \setminus F}(u,v) \leq t \cdot d_{G \setminus F}(u,v) \leq t \cdot w(u,v)$.  

For the if direction, let $F \subseteq V$ with $|F| \leq f$ and $u,v \in V \setminus F$.  Let $P = (u = x_0, x_1, \dots, x_p = v)$ be the shortest path in $G \setminus F$ between $u$ and $v$.  If $p = 1$ then $P = (u,v)$, and thus $d_{H \setminus F}(u,v) = w(u,v) = d_{G \setminus F}(u,v)$.  If $p > 1$, then we know that $d_{G \setminus F}(x_{i-1}, x_i) = w(x_{i-1}, x_{i})$ for all $i \in \{1,2,\dots, p\}$, and thus
\begin{align*}
d_{H \setminus F}(u,v) &\leq \sum_{i=1}^p d_{H \setminus F}(x_{i-1}, x_i) \leq \sum_{i=1}^p t \cdot w(x_{i-1}, x_i)\\
&= t \sum_{i=1}^p w(x_{i-1}, x_i) = t \cdot d_{G \setminus F}(u,v).
\end{align*}
Hence $H$ is an $f$-VFT $t$-spanner of $G$.
\end{proof}

The original greedy algorithm for fault-tolerant spanners was introduced and analyzed by~\cite{BDPW18}, with an improved analysis by~\cite{BP19}, and is given in Algorithm~\ref{ALG:old}.  The part of this algorithm which takes exponential time is the ``if'' condition, i.e., checking whether there is a fault set which hits all stretch-$(2k-1)$ paths.  For edge fault-tolerance, the algorithm is the same except that $F$ is an edge set.
\begin{algorithm}
\caption{Greedy $f$-VFT $(2k-1)$-Spanner Algorithm}
\label{ALG:old}
\begin{algorithmic}
\STATE $\mathbf{function}$ FT-GREEDY$(G = (V,E,w),k,f)$\\
\STATE $H\leftarrow (V,\emptyset,w)$
\FORALL{$\{u,v\}\in E$ in nondecreasing weight order} 
    \IF{there exists a set $F$ of at most $f$ vertices such that $d_{H\setminus F}(u,v) > (2k-1) w(u,v)$}
        \item add $\{u,v\}$ to H
    \ENDIF
\ENDFOR
\RETURN H
\end{algorithmic}
\end{algorithm}

%\clearpage %%

\section{Unweighted Graphs}

In this section we design a polynomial-time algorithm for the special case of unweighted (or unit-weighted) graphs.  We begin by designing a simple approximation algorithm for the \textsc{Length-Bounded Cut} problem, and then show that this algorithm can be plugged into the greedy algorithm with only a small loss.

\subsection{Length-Bounded Cut}
In order to design a polynomial-time variant of the greedy algorithm, we want to replace the ``if'' condition by something that can be computed in polynomial time.  While there are many possibilities, there are two obvious approaches: we could try to compute the maximum $t$ such that there is a fault set of size $f$ which hits all $t$-hop paths, or we could try to compute the minimum $f$ such that there is a fault set of size $f$ which hits all $t$-hop paths.  It turns out that this second approach is more fruitful.

Consider the following problem, known as the \textsc{Length-Bounded Cut} problem~\cite{BEHKSS06}.  The input is an unweighted graph $G = (V, E)$ with $|V|=n$ and $|E| = m$, vertices $u,v \in V$ (known as the \emph{terminals}), and a positive integer $t$.  A \emph{length-$t$-cut} is a subset $F \subseteq V \setminus \{u,v\}$ such that $d_{G \setminus F}(u,v) > t$.  The goal is to find the length-$t$-cut of minimum cardinality.   

We are essentially going to design a $t$-approximation for this problem.  But since we do not need the full power of this approximation, in order to speed it up we will instead consider a gap decision version of the problem.  In the LBC($t, \alpha$) problem, the input is the same as in \textsc{Length-Bounded Cut} but there is an additional input parameter $\alpha$.  If there is a length-$t$-cut of size at most $\alpha$, then we must return YES.  If there is no length-$t$-cut of size at most $\alpha t$, then we must return NO.  For intermediate values we are allowed to return either YES or NO.

Recall that breadth-first search (BFS) finds shortest paths in unweighted graphs in $O(m+n)$ time.  So we can use BFS to check whether there is a path with at most $t$ hops from $u$ to $v$ in $O(m + n)$ time. This gives the following natural algorithm (Algorithm~\ref{ALG:LBC}), which is essentially the standard ``frequency'' approximation of \textsc{Set Cover} (or \textsc{Hitting Set}).  

\begin{algorithm}
\caption{Algorithm for LBC($t, \alpha$)}
\label{ALG:LBC}
\begin{algorithmic}
\STATE $F \leftarrow \emptyset$
\FOR{$i = 1$ to $\alpha+1$}
    \STATE Run BFS to find a path $P$ of length at most $t$ from $u$ to $v$ in $G \setminus F$ if one exists.
    \IF{no such $P$ exists} \RETURN YES
    \ELSE \STATE Add all vertices of $P \setminus \{u,v\}$ to $F$
    \ENDIF 
\ENDFOR
\RETURN NO
\end{algorithmic}
\end{algorithm}

\begin{theorem} \label{thm:LBC}
Algorithm~\ref{ALG:LBC} correctly decides LBC($t, \alpha$) and runs in $O((m + n)\alpha)$ time.
\end{theorem}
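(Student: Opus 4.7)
The plan is to verify the two cases of the gap decision by straightforward arguments about what the BFS subroutine accomplishes at each iteration, and then observe that the running time is immediate.

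For completeness (the YES case) I would assume there exists a length-$t$-cut $F^*$ with $|F^*| \leq \alpha$ and argue by induction on the iteration index $i$ that either the algorithm has already returned YES or $|F \cap F^*| \geq i$ after iteration $i$. The key observation is that in iteration $i$, BFS finds a path $P_i$ lying entirely in $G \setminus F_{i-1}$, so $P_i$ is disjoint from $F_{i-1}$. Since $F^*$ is a length-$t$-cut and $P_i$ has length at most $t$, the intersection $F^* \cap P_i$ is nonempty, and because $P_i \cap F_{i-1} = \emptyset$ this intersection actually lies in $F^* \setminus F_{i-1}$. Adding every internal vertex of $P_i$ to $F$ therefore brings in at least one new element of $F^*$. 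After $\alpha$ iterations we have $F \supseteq F^*$, so the BFS at iteration $\alpha+1$ finds no $u$-$v$ path of length at most $t$ and the algorithm returns YES.

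For soundness (the NO case) I would argue the contrapositive: whenever the algorithm returns YES at some iteration $i \leq \alpha+1$, the current set $F$ is itself a length-$t$-cut of size at most $\alpha t$. That $F$ is a cut is immediate from BFS failing to find any $u$-$v$ path of length at most $t$ in $G \setminus F$. For the size bound, at most $\alpha$ prior iterations added vertices to $F$, and each such iteration added the internal vertices of a path of length at most $t$, contributing at most $t-1$ vertices. Hence $|F| \leq \alpha(t-1) \leq \alpha t$, so if no length-$t$-cut of size at most $\alpha t$ exists the algorithm cannot return YES and must reach the final NO.

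The running-time bound is routine: the loop performs $\alpha+1$ iterations, each dominated by one breadth-first search taking $O(m+n)$ time, yielding the claimed $O((m+n)\alpha)$ total. There is no real obstacle here beyond stating the invariant cleanly and being careful that each found path contributes at most $t-1$ internal vertices, so that the cut exhibited in the YES-return case stays within the allowed slack of $\alpha t$.
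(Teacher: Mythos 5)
Your proof is correct and takes essentially the same approach as the paper: a frequency-style argument showing each found path consumes at least one new vertex of the hypothetical small cut $F^*$, plus a bound of $\alpha t$ on the size of the cut the algorithm implicitly exhibits when it returns YES. The only difference is presentational -- you make the ``each iteration hits a fresh element of $F^*$'' claim an explicit induction and track internal vertices as at most $t-1$ rather than the paper's looser $t$, but the substance is identical.
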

\begin{proof}
By the running time of BFS, we know that each iteration of Algorithm~\ref{ALG:LBC} takes $O(m + n)$ time, and thus the total time is $O((m + n)\alpha)$ as claimed.

Suppose that there is a length-$t$-cut $F^*$ of size at most $\alpha$.  Then for every path $P$ which our algorithm considers (and adds to $F$), it must be the case that $|P \cap F^*| \geq 1$ since $F^*$ must hit all paths of length at most $t$.  Since we remove each path we consider (by adding it to $F$), this means that there will be no more such paths after at most $\alpha$ iterations and thus the algorithm will return YES as required.

Now suppose that every length-$t$-cut has size larger than $\alpha t$.  Since we add at most $t$ vertices to $F$ in each iteration, at the beginning of iteration $\alpha+1$ the set $F$ has size at most $\alpha t$.  Thus in every iteration some path $P$ of length at most $t$ exists, so the algorithm will return NO.  
\end{proof}

To handle edge fault-tolerance, we need to slightly change the definition of LBC($t, \alpha$) to be about edge sets rather than vertex sets, so in the algorithm $F$ is an edge set and we add the edges of $P$ rather than the vertices.  But other than that trivial change, the algorithm and analysis are identical.    

\subsection{Modified Greedy}
Let $G=(V,E)$ be an undirected unweighted graph. We will modify Algorithm~\ref{ALG:old} by using our new algorithm for LBC, Algorithm~\ref{ALG:LBC}.  For an EFT spanner algorithm, we simply use the edge-based version of Algorithm~\ref{ALG:LBC}.

\begin{algorithm}
\caption{Modified Greedy VFT Spanner Algorithm}
\label{ALG:new}
\begin{algorithmic}
\STATE $\mathbf{function}$ FT-GREEDY$(G = (V,E),k,f)$\\
\STATE $H\leftarrow (V,\emptyset,w)$
\FORALL{$\{u,v\}\in E$ in arbitrary order} 
    \IF{Algorithm~\ref{ALG:LBC} returns YES when run on input graph $H$ with terminals $u,v$ and $t = 2k-1$ and $\alpha = f$}
    \STATE Add $\{u,v\}$ to $H$
    \ENDIF
\ENDFOR
\RETURN H
\end{algorithmic}
\end{algorithm}

We first prove that this algorithm does indeed return a valid solution, despite the use of an approximation algorithm to determine whether or not to add an edge (we prove this only for VFT for simplicity, but the proof for EFT is analogous).

\begin{theorem} \label{thm:correct-unweighted}
Algorithm~\ref{ALG:new} returns an $f$-VFT $(2k-1)$-spanner.
\end{theorem}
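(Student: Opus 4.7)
The plan is to reduce to the simpler condition given by Lemma~\ref{lem:simple} and then do a two-case analysis on whether the candidate edge was added. Since $G$ is unweighted, $w(u,v)=1$ for every edge, so by Lemma~\ref{lem:simple} it suffices to prove that for every edge $\{u,v\}\in E$ and every $F\subseteq V$ with $|F|\leq f$ and $u,v\notin F$ we have $d_{H\setminus F}(u,v)\leq 2k-1$. Fix such an edge $\{u,v\}$ and such a fault set $F$.

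If the algorithm added $\{u,v\}$ to $H$ during its iteration, then the edge itself lies in $H\setminus F$ (both endpoints survive by assumption), so $d_{H\setminus F}(u,v)=1\leq 2k-1$ and we are done. Otherwise, $\{u,v\}$ was not added, which means that when Algorithm~\ref{ALG:new} invoked Algorithm~\ref{ALG:LBC} on the current subgraph $H'$ (the state of $H$ at the time $\{u,v\}$ was considered) with parameters $t=2k-1$ and $\alpha=f$, the answer was NO. I would then invoke the contrapositive of the YES-guarantee from Theorem~\ref{thm:LBC}: the guarantee says that if there is a length-$(2k-1)$-cut of size at most $f$ then the algorithm must return YES, so a NO answer implies that every length-$(2k-1)$-cut in $H'$ between $u$ and $v$ has size strictly greater than $f$. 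In particular, the set $F$ is not such a cut, so there is a path of at most $2k-1$ hops from $u$ to $v$ in $H' \setminus F$.

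The final step is a monotonicity observation: the subgraph $H$ maintained by Algorithm~\ref{ALG:new} only grows as the loop proceeds, so any path present in $H'\setminus F$ is still present in $H\setminus F$ at termination. Thus $d_{H\setminus F}(u,v)\leq 2k-1$, as required, and Lemma~\ref{lem:simple} completes the proof.

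I do not expect a real obstacle here; the only subtle point is which direction of the gap-decision guarantee of Theorem~\ref{thm:LBC} is used. For correctness we only need the ``NO $\Rightarrow$ no small cut'' direction, which is why a coarse ($t$-factor) approximation to \textsc{Length-Bounded Cut} is still sufficient to certify the spanner property; the other direction (YES possibly firing even when the true optimum is between $\alpha$ and $\alpha t$) only causes us to add some extra edges, and all the real work will occur in the subsequent size analysis rather than in this correctness argument.
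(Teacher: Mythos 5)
Your proof is correct and follows essentially the same route as the paper: reduce via Lemma~\ref{lem:simple} to showing $d_{H\setminus F}(u,v)\leq 2k-1$ for each edge $\{u,v\}$, handle the added/not-added cases, and use the NO-direction of Theorem~\ref{thm:LBC}. The one place you are slightly more explicit than the paper is in distinguishing the snapshot $H'$ from the final $H$ and invoking monotonicity of the growing subgraph; the paper leaves that step implicit, but it is the same underlying argument.
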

\begin{proof}
Let $F\subseteq V$ be an arbitrary fault set  with $|F|\leq f$ and $\{u,v\} \in E$ with $u,v \not\in F$.  By Lemma~\ref{lem:simple}, we just need to show that $d_{H \setminus F}(u,v) \leq 2k-1$ (since $G$ is unweighted) in order to prove the theorem.  Clearly this is true if $\{u,v\} \in E(H)$.  If $\{u,v\} \not \in E(H)$, then when the algorithm considered $\{u,v\}$ it must have been the case that Algorithm~\ref{ALG:LBC} returned NO.   Theorem~\ref{thm:LBC} then implies that every length-$(2k-1)$-cut on $H$ (for $u,v$) has size larger than $f$.  Thus $F$ is \emph{not} a length-$(2k-1)$-cut in $H$ for $u,v$, and so $d_{H \setminus F}(u,v) \leq 2k-1$.
\end{proof}

Now we want to bound the size of the returned spanner.  To do this, a natural approach would be to argue that the spanner it returns is a subgraph of the greedy $((2k-1)f)$-VFT spanner, since it seems like whenever our modified algorithm requires us to add an edge it has found a cut certifying that the greedy $((2k-1)f)$-VFT spanner would also have had to add that edge.  Unfortunately, this is not true since the modified algorithm might not add some edges that the true greedy algorithm would have added, and thus later on our algorithm might have to actually add some edges that the true greedy algorithm would not have had to add.

The next natural approach would be to try to use the analysis of~\cite{BP19} as a black box.  Unfortunately we cannot do this either, since the lemmas they use are specific to the true greedy algorithm rather than our modification.  However, it is straightforward to modify their analysis so that it continues to hold for our modified algorithm, with only an additional loss of a factor  of $k$.  We do this here for completeness.  As in~\cite{BP19}, we start with the definition of a \emph{blocking set}, and then give two lemmas using this definition.  And also as in~\cite{BDPW18,BP19}, we only prove this for VFT, as the proof for EFT is essentially identical.

\begin{definition}[\cite{BP19}]
For any graph $G=(V,E)$, we define $B\subseteq V\times E$ to be a \emph{$t$-blocking set} of $G$ if for all $(v,e)\in B$, we have $v \not\in e$ and for any cycle $C$ in $G$ with $|C|\leq t$, there exists $(v,e)\in B$ such that $v,e\in C$.
\end{definition}

\begin{lemma} \label{lemma: block set size}
Any graph $H$ returned by Algorithm~\ref{ALG:new} with parameters $k,\,f$ has a $(2k)$-blocking set of size at most $(2k-1)f|E(H)|$.
\end{lemma}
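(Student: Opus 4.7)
My plan is to follow the blueprint used in the analysis of the true greedy algorithm from \cite{BP19}, but replace the role of the size-$f$ certifying fault set (which exists in the true greedy algorithm) with the larger set that is implicitly constructed by Algorithm~\ref{ALG:LBC}. The key observation is that whenever Algorithm~\ref{ALG:LBC} returns YES on a candidate edge $e = \{u,v\}$, it does so because the set $F$ it has accumulated up to that point is an honest length-$(2k-1)$-cut for $u,v$ in the current graph. Moreover, by the structure of Algorithm~\ref{ALG:LBC}, this $F$ consists of the internal vertices of at most $f$ BFS paths of length at most $2k-1$, so $|F| \leq f(2k-1)$ (in fact at most $f(2k-2)$, since endpoints $u,v$ are excluded).

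The construction is then almost forced. For each edge $e = \{u,v\} \in E(H)$, let $F_e$ denote the set $F$ produced by Algorithm~\ref{ALG:LBC} at the moment $e$ was added to $H$. Define
\[
B \;=\; \{(x, e) : e \in E(H),\ x \in F_e\}.
\]
Since no $x \in F_e$ can be $u$ or $v$ (Algorithm~\ref{ALG:LBC} never adds endpoints), we have $x \notin e$ for every $(x,e) \in B$. The size bound is then immediate: $|B| \leq \sum_{e \in E(H)} |F_e| \leq (2k-1) f |E(H)|$.

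It remains to verify that $B$ is a $(2k)$-blocking set. Let $C$ be any cycle in $H$ with $|C| \leq 2k$, and let $e^* = \{u,v\}$ be the edge of $C$ that the modified greedy algorithm adds \emph{last}. At the moment $e^*$ was considered, all other edges of $C$ were already present in $H$, so they formed a $u$-$v$ path $P$ of length $|C|-1 \leq 2k-1$ in the then-current graph. Because Algorithm~\ref{ALG:LBC} returned YES on $e^*$, the set $F_{e^*}$ it built up is a length-$(2k-1)$-cut for $u,v$ in that graph, and in particular must contain at least one internal vertex $x$ of $P$. This $x$ lies on $C$, is distinct from $u$ and $v$, and satisfies $(x, e^*) \in B$, so $B$ blocks $C$.

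The only place where any subtlety arises is justifying that $F_e$ as defined really is a length-$(2k-1)$-cut in the graph present at the moment $e$ was tested; this is essentially a re-reading of Theorem~\ref{thm:LBC}, since in the YES case the loop exits precisely when no remaining $u$-$v$ path of length at most $2k-1$ exists in $H \setminus F_e$. Aside from substituting $f(2k-1)$ for $f$ in the per-edge bound, the proof is identical in spirit to the blocking-set construction of \cite{BP19}, which is exactly why we pay only an extra factor of roughly $k$ in the final size analysis.
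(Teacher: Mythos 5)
Your proof is correct and follows essentially the same route as the paper: define $F_e$ as the length-$(2k-1)$-cut certifying the LBC algorithm's YES answer when $e$ was added, set $B = \{(x,e) : e \in E(H),\ x \in F_e\}$ with the per-edge size bound $|F_e| \leq f(2k-1)$, and verify the blocking property by taking the last-added edge of any short cycle. The only cosmetic difference is that you unpack Algorithm~\ref{ALG:LBC} directly to exhibit the cut (even noting the slightly tighter $f(2k-2)$ bound), whereas the paper invokes Theorem~\ref{thm:LBC} to assert its existence; both are valid and lead to the same construction.
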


It was shown in \cite{BP19} that the graph $H$ returned by the standard VFT greedy algorithm with parameters $k,\,f$ has a ($2k$)-blocking set of size at most $f|E(H)|$.\footnote{In~\cite{BP19} the parameter ``$k$'' is used to denote the stretch, while for us the stretch is $2k-1$, and thus there are slight constant factor differences between the statements as written in~\cite{BP19} and our interpretation of their statements.  But our statements about~\cite{BP19} are correct under this change of variables.}  So our modified algorithm satisfies the same lemma up to a factor of $O(k)$.  The proof is almost identical in our case; we essentially replace all instances of $f$ in their proof with $(2k-1)f$.

\begin{proof}[Proof of Lemma~\ref{lemma: block set size}]
Let $e = \{u,v\}$ be some edge in $E(H)$, and let $H'$ be the subgraph maintained by the algorithm just before $e$ is added to $E(H)$ (so $H'$ is a subset of the final $H$).  Since $e$ was added by Algorithm~\ref{ALG:new}, when it was considered Algorithm~\ref{ALG:LBC} must have returned YES.  Thus by Theorem~\ref{thm:LBC} there is some set $F_e \subseteq V \setminus \{u,v\}$ with $|F_e| \leq f(2k-1)$ such that $d_{H' \setminus F_e}(u,v) > 2k-1$.  

Now we can define the blocking set: let $B = \{(x,e) : e \in E(H), x \in F_e\}$.

Since $|F_e| \leq f(2k-1)$ for all $e \in E(H)$, we immediately get that $|B| \leq |E(H)| f (2k-1)$ as claimed.  So we now need to show that $B$ is a $(2k)$-blocking set.  To see this, let $C$ be any cycle with at most $2k$ vertices in $H$, and let $e = \{u,v\}$ be the last edge of this cycle to be added to $H$.  Let $H'$ be the subgraph of $H$ built by the algorithm just before $e$ is added.  Then $C \setminus e$ is a $u-v$ path in $H'$ of length at most $2k-1$, and thus there is some $x \in C \setminus \{u,v\}$ that is in $F_e$.  Thus $(x,e) \in B$.
\end{proof}

Now we know that the spanner returned by Algorithm~\ref{ALG:new} has a small blocking set.  The next lemma implies that any such graph must have a dense but high-girth subgraph.

\begin{lemma} \label{lemma: subgraph bound}
Let $H$ be any graph on $n$ nodes and $m$ edges (with $f=o(n)$) that has a $(2k)$-blocking set $B$ of size at most $(2k-1)fm$. Then $H$ has a subgraph on $O(n/(kf))$ nodes and $\Omega(m/(kf)^2)$ edges that has girth greater than $2k$.
\end{lemma}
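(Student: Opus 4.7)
The plan is a direct probabilistic argument via uniform vertex sampling, which sidesteps the need for concentration bounds by fixing the sample size. I will set $s = \lceil n/(4(2k-1) f) \rceil$, let $S \subseteq V(H)$ be a uniformly random subset of size exactly $s$, and form $H'$ from $H[S]$ by deleting every edge $e$ such that some pair $(v,e) \in B$ has $v \in S$. The vertex bound is immediate: $|V(H')| \leq s = O(n/(kf))$.

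Next I will verify the girth property directly from the definition of a blocking set. Any cycle $C$ of length at most $2k$ surviving in $H'$ is in particular a cycle in $H[S]$, so by the blocking-set property there exists some $(v,e) \in B$ with $v, e \in C$; but then $v \in S$ and $e \subseteq S$, so the pruning step would have deleted $e$, contradicting $C \subseteq H'$. Hence $H'$ has girth greater than $2k$.

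For the edge count I will use linearity of expectation. A fixed edge $\{x,y\} \in E(H)$ lies in $H[S]$ with probability $s(s-1)/(n(n-1))$, and for a fixed $(v, \{x,y\}) \in B$ the three distinct vertices $v, x, y$ all lie in $S$ with probability $s(s-1)(s-2)/(n(n-1)(n-2))$. Combined with $|B| \leq (2k-1)fm$, linearity yields
\[ \mathbb{E}[|E(H')|] \geq \frac{m\, s(s-1)}{n(n-1)} - (2k-1) f m \cdot \frac{s(s-1)(s-2)}{n(n-1)(n-2)} = \frac{m\, s(s-1)}{n(n-1)}\left[1 - (2k-1) f \cdot \frac{s-2}{n-2}\right]. \]
With our choice of $s$ (and $n$ large relative to $kf$, guaranteed by $f = o(n)$), the bracket will be at least $1/2$, giving $\mathbb{E}[|E(H')|] = \Omega(m(s/n)^2) = \Omega(m/(kf)^2)$. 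Since $|V(H')|$ is deterministic, any realization of $S$ attaining this expected edge count simultaneously gives both bounds.

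The main obstacle one would expect here is the usual tension between controlling the vertex and edge counts from a single random experiment: independent Bernoulli vertex sampling at rate $\Theta(1/(kf))$ would require concentration of the induced edge count, which is awkward without degree control on $H$. Sampling a uniformly random subset of fixed size $s$ removes this difficulty because $|V(H')|$ becomes deterministic, leaving only the expected edge count to analyze. The hypothesis $f = o(n)$ is used only to ensure $s \geq 3$ so the triple-sampling probabilities are well defined; in the remaining small regime $n = O(kf)$ and the conclusion is trivially witnessed by a constant-size subgraph.
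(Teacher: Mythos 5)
Your proof is correct and takes essentially the same approach as the paper: sample a uniformly random vertex subset of fixed size $\Theta(n/(kf))$, prune any edge that together with a sampled vertex forms a surviving blocking pair, verify the girth bound deterministically, bound the expected edge count via linearity (using the pairwise and triple-wise inclusion probabilities for a fixed-size random subset), and conclude because the vertex count and girth bound are deterministic. The only differences are cosmetic, namely a slightly smaller sample-size constant ($n/(4(2k-1)f)$ versus the paper's $n/(2(2k-1)f)$) and factoring the two expectation terms into a bracketed form rather than bounding each separately.
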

\begin{proof}
Let $H'$ denote the induced subgraph of $H$ on a uniformly random subset of exactly $\lfloor n/(2(2k-1)f)\rfloor$ nodes.  Let $B':=B\cap(V(H')\times E(H'))$, and let $H''$ denote the graph obtained by removing from $H'$ every edge contained in any pair in $B'$.  The graph $H''$ will be the one we analyze.  

The easiest property to analyze is the number of nodes in $H''$: there are precisely $\lfloor n/(2(2k-1)f)\rfloor$ vertices in $H''$, which is $O(n/(kf))$ as claimed.

The next easiest property of $H''$ to prove is the girth.  Let $C$ be a cycle in $H$ with at most $2k$ nodes.  $C$ is either in $H'$ or it is not.  If it is not in $H'$ then some vertex in $C$ is not in $V(H')$, and thus $C$ is not in $H''$.  On the other hand, if $C$ is in $H'$ then by the definition of $B$ there is some edge $(x,e) \in B$ so that $e \in C$, and also $(x,e) \in B'$, and thus $C$ does not exist in $H''$.

To analyze $|E(H'')|$, we start with the following observations.
\begin{itemize}
    \item Each $\{u,v\}\in E(H)$ remains in $E(H')$ if $u,v\in V(H')$. This happens with probability
        \begin{align*}
        &\quad \frac{\lfloor n/(2(2k-1)f)\rfloor}{n}\cdot\frac{\lfloor n/(2(2k-1)f)\rfloor-1}{n-1}\\
        & \geq(1-o(1))\frac{1}{4((2k-1)f)^2}
        \end{align*}
    \item Each $(x,\{u,v\})\in B$ remains in $B'$ if $u,v,x\in V(H')$. This happens with probability
        \begin{align*}
        &\frac{\lfloor n/(2(2k-1)f)\rfloor}{n}\cdot\frac{\lfloor n/(2(2k-1)f)\rfloor-1}{n-1}\cdot\frac{\lfloor n/(2(2k-1)f)\rfloor-2}{n-2}\\
        &\leq \frac{1}{8((2k-1)f)^3}
        \end{align*}
\end{itemize}

Now we can use these observations to compute the expected size of $E(H'')$:
\begin{align*}
    \mathbb{E}[|E(H'')|] &\geq \mathbb{E}[|E(H')|-|B'|] = \mathbb{E}[|E(H')|] - \mathbb{E}[|B'|] \\
    &\geq(1-o(1))\left(\frac{|E(H)|}{4((2k-1)f)^2}\right)-\frac{|B|}{8((2k-1)f)^3} \\ 
    &\geq (1-o(1)) \left(\frac{m}{4((2k-1)f)^2}\right) - \frac{(2k-1)fm}{8((2k-1)f)^3} \\
    &\geq (1-o(1))\left(\frac{m}{4((2k-1)f)^2}\right)-\frac{m}{8((2k-1)f)^2} \\
    &= (1-o(1)) \left( \frac{m}{8((2k-1)f)^2} \right) \\
    &=\Omega\left(\frac{m}{(kf)^2}\right)
\end{align*}

Note that the  bounds on $|V(H'')|$ and on the girth of $H''$ are deterministic.  So there is some subgraph which has those bounds and where the number of edges is at least the expectation, proving the lemma.  
\end{proof}

This lemma allows us to prove the size bound.

\begin{theorem} \label{thm:size}
The subgraph $H$ returned by Algorithm~\ref{ALG:new} has at most $O\left(kf^{1-1/k} n^{1+1/k}\right)$ edges.
\end{theorem}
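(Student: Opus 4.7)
The plan is to combine Lemma~\ref{lemma: block set size} and Lemma~\ref{lemma: subgraph bound} with the classical girth-edge tradeoff underlying Theorem~\ref{thm:ADDJS}. Write $m := |E(H)|$. First, Lemma~\ref{lemma: block set size} guarantees that $H$ carries a $(2k)$-blocking set of size at most $(2k-1)fm$. Feeding this blocking set into Lemma~\ref{lemma: subgraph bound} produces a subgraph $H'' \subseteq H$ on $O(n/(kf))$ vertices, with $|E(H'')| = \Omega(m/(kf)^2)$, and with girth strictly greater than $2k$.

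Next I would invoke the standard Moore-type bound: any graph on $n'$ vertices with girth greater than $2k$ has at most $O(n'^{1+1/k})$ edges. Applying this to $H''$ yields
\[ \Omega\!\left(\frac{m}{(kf)^2}\right) \leq |E(H'')| \leq O\!\left(\left(\frac{n}{kf}\right)^{1+1/k}\right). \]
Rearranging, and using $k^{1-1/k} \leq k$ at the final step, gives
\[ m \leq O\!\left((kf)^{2-(1+1/k)} n^{1+1/k}\right) = O\!\left((kf)^{1-1/k} n^{1+1/k}\right) \leq O\!\left(k\, f^{1-1/k} n^{1+1/k}\right), \]
which is exactly the bound claimed.

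There is essentially no serious obstacle left at this stage: both preceding lemmas do the heavy lifting, and what remains is the familiar high-girth-implies-sparsity step from the classical spanner literature. The one quantitative point worth flagging is that the extra factor of $k$ in the theorem, compared to the fully optimal $f^{1-1/k} n^{1+1/k}$ bound achieved by the exponential-time greedy algorithm, arises precisely from the $(2k-1)f$ blocking-set size in Lemma~\ref{lemma: block set size}, which itself traces back to the factor-$(2k-1)$ approximation loss in our LBC algorithm (Theorem~\ref{thm:LBC}). So any tightening would have to start by sharpening the approximation guarantee for \textsc{Length-Bounded Cut}, rather than by altering the counting argument here.
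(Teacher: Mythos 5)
Your proof is correct and follows exactly the paper's argument: combine Lemma~\ref{lemma: block set size} and Lemma~\ref{lemma: subgraph bound}, apply the Moore bound to the high-girth subgraph, and rearrange. The only detail you leave implicit is that Lemma~\ref{lemma: subgraph bound} assumes $f = o(n)$; as in the paper, the complementary case $f = \Omega(n)$ is trivial since then $k f^{1-1/k} n^{1+1/k} = \Omega(n^2) \geq |E(H)|$.
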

\begin{proof}
If $f = \Omega(n)$ then the theorem is trivially true.  Otherwise, by Lemmas \ref{lemma: block set size} and \ref{lemma: subgraph bound} we know that $H$ has a subgraph $S$ of girth larger than $2k$ on $O(n/(kf))$ nodes and with  $|E(S)| \geq \Omega\left(\frac{|E(H)|}{(kf)^2}\right)$ edges.  But it has long been known that any graph with $n$ vertices and girth larger than $2k$ must have at most $O(n^{1+1/k})$ edges (this is the key fact used in the original non-fault-tolerant greedy algorithm analysis~\cite{AlthoferDDJS:93}).  Hence $|E(S)| \leq O((n/(kf))^{1+1/k})$.  Therefore there are constants $c_1, c_2 > 0$ such that for large enough $n$,
\begin{align*}
    &c_1\left(\frac{n}{kf}\right)^{1+1/k} \geq |E(S)| \geq c_2 \left(\frac{|E(H)|}{(kf)^2}\right) \\
    \implies &|E(H)| \leq O\left((kf)^{1-1/k} n^{1+1/k}\right) = O\left(kf^{1-1/k}n^{1+1/k}\right). \qedhere
\end{align*}
\end{proof}

\begin{theorem} \label{thm:time}
The worst-case running time of Algorithm~\ref{ALG:new} is at most $O\left(m k  f^{2-1/k} n^{1+1/k}\right)$.
\end{theorem}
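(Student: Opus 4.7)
The plan is to simply combine the per-call cost of Algorithm~\ref{ALG:LBC} (from Theorem~\ref{thm:LBC}) with the size bound on $H$ (from Theorem~\ref{thm:size}), multiplied by the number of outer iterations. The main loop of Algorithm~\ref{ALG:new} processes each of the $m$ edges of $G$ exactly once, and for each edge it invokes Algorithm~\ref{ALG:LBC} on the current subgraph $H$ with parameters $t = 2k-1$ and $\alpha = f$. By Theorem~\ref{thm:LBC}, one such invocation runs in time $O((|E(H)| + |V(H)|) \cdot f)$.

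Next I would bound the size of the intermediate $H$ uniformly. Since Algorithm~\ref{ALG:new} only ever adds edges to $H$, every intermediate $H$ encountered during execution is a subgraph of the final returned spanner. Therefore, by Theorem~\ref{thm:size}, we have $|E(H)| = O(k f^{1-1/k} n^{1+1/k})$ at every step, while $|V(H)| = n$ throughout. Plugging into the per-call bound, a single invocation of Algorithm~\ref{ALG:LBC} costs
\[
O\bigl((k f^{1-1/k} n^{1+1/k} + n) \cdot f\bigr) = O\bigl(k f^{2-1/k} n^{1+1/k}\bigr),
\]
where the $n f$ term is absorbed since $k f^{1-1/k} n^{1+1/k} \geq n$ for $k, f \geq 1$.

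Multiplying by the $m$ outer iterations yields the claimed total running time of $O(m \cdot k f^{2-1/k} n^{1+1/k})$. There is no real obstacle here; the only point that deserves an explicit sentence is the monotonicity observation that $|E(H)|$ during execution never exceeds its final value, so the size bound from Theorem~\ref{thm:size} applies to every call to Algorithm~\ref{ALG:LBC}.
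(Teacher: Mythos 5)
Your proof is correct and follows essentially the same route as the paper's: $m$ outer iterations, each costing $O((|E(H)|+n)f)$ by Theorem~\ref{thm:LBC}, with $|E(H)|$ bounded by Theorem~\ref{thm:size}. You spell out the monotonicity of $|E(H)|$ and the absorption of the $nf$ term, which the paper leaves implicit, but the argument is the same.
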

\begin{proof}
Algorithm~\ref{ALG:new} has $|E| = m$ iterations, each of which consists of one call to Algorithm~\ref{ALG:LBC} with $\alpha = f$ on graph $H$.  So the running time of each iteration (by Theorem~\ref{thm:LBC}) is at most $O((|E(H)| + n)f)$.  Theorem~\ref{thm:size} implies that $|E(H)| \leq O(kf^{1-1/k} n^{1+1/k})$, and thus the total running time is at most $O(m kf^{2-1/k} n^{1+1/k})$.
\end{proof}

Theorems~\ref{thm:correct-unweighted}, \ref{thm:size}, and \ref{thm:time} together imply Theorem~\ref{thm:main} in the unweighted case.

\section{Weighted Graphs}

We now show that we can use the algorithm we designed for the unweighted setting even in the presence of weights.  Our algorithm is very simple: we order the edges in nondecreasing weight order, but then run the \emph{unweighted} algorithm on the edges in this order.  We give this algorithm more formally as Algorithm~\ref{ALG:weight}.  Again, changing to edge fault-tolerance is straightforward: we just use the edge version of Algorithm~\ref{ALG:LBC}.  So we prove this only for vertex fault-tolerance for simplicity. 

\begin{algorithm}
\caption{Modified Greedy VFT Spanner Algorithm (Weighted)}
\label{ALG:weight}
\begin{algorithmic}
\STATE $\mathbf{function}$ FT-GREEDY$(G = (V,E,w),k,f)$\\
\STATE $H\leftarrow (V,\emptyset,w)$
\FORALL{$\{u,v\}\in E$ in nondecreasing weight order} 
    \IF{Algorithm~\ref{ALG:LBC} returns YES when run on input graph $H$ (with no weights) with terminals $u,v$ and $t = 2k-1$ and $\alpha = f$}
   \STATE Add $\{u,v\}$ to H
    \ENDIF
\ENDFOR
\RETURN H
\end{algorithmic}
\end{algorithm}

\begin{theorem} \label{thm:weights}
Algorithm~\ref{ALG:weight} returns an $f$-VFT $(2k-1)$-spanner with at most $O(kf^{1-1/k} n^{1+1/k})$ edges in time at most $O(m k  f^{2-1/k} n^{1+1/k})$.
\end{theorem}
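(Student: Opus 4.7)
The plan is to observe that the size and running-time bounds follow immediately from the unweighted analysis, and then to prove correctness using the nondecreasing weight ordering. For the size bound: nothing in the proofs of Lemma~\ref{lemma: block set size}, Lemma~\ref{lemma: subgraph bound}, and Theorem~\ref{thm:size} used the arbitrary edge ordering in Algorithm~\ref{ALG:new}; the blocking-set construction only required that whenever an edge $e$ is added, Algorithm~\ref{ALG:LBC} returned YES on the current subgraph, which guarantees the existence of a set $F_e$ of size at most $(2k-1)f$ fulfilling $d_{H'\setminus F_e}(u,v) > 2k-1$. Since Algorithm~\ref{ALG:weight} feeds LBC the unweighted current graph $H$ in exactly the same manner, the identical argument yields the $O(kf^{1-1/k}n^{1+1/k})$ bound, and the running-time bound then follows by repeating the argument of Theorem~\ref{thm:time}.

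The only real work is to show that $H$ is actually an $f$-VFT $(2k-1)$-spanner in the presence of weights. I would invoke Lemma~\ref{lem:simple}: it suffices to check that $d_{H\setminus F}(u,v) \leq (2k-1)\,w(u,v)$ for every $F\subseteq V$ with $|F|\leq f$ and every edge $\{u,v\}\in E$ with $u,v\notin F$ and $d_{G\setminus F}(u,v)=w(u,v)$. If $\{u,v\}\in E(H)$ this is trivial, so assume it is not. Then when the algorithm considered $\{u,v\}$, Algorithm~\ref{ALG:LBC} returned NO on the then-current unweighted subgraph $H'\subseteq H$.

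By Theorem~\ref{thm:LBC}, every length-$(2k-1)$-cut in $H'$ for the pair $u,v$ has size larger than $f\cdot(2k-1)$, and in particular larger than $f$. Hence $F$ is not such a cut, so there is a $u$–$v$ path $P$ in $H'\setminus F$ of hop-length at most $2k-1$. Here is where the nondecreasing weight ordering enters: every edge present in $H'$ was considered strictly before $\{u,v\}$, hence has weight at most $w(u,v)$. Therefore the total weight of $P$ is at most $(2k-1)\,w(u,v)$, and since $H'\subseteq H$ this path also exists in $H\setminus F$. Thus $d_{H\setminus F}(u,v)\leq (2k-1)\,w(u,v)$ as required. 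The main obstacle I anticipate is purely conceptual rather than technical: convincing the reader that it is legitimate to ignore weights inside the LBC call, and the resolution is exactly the elementary but crucial observation above that nondecreasing weight order turns a hop-count bound into a weight bound for free.
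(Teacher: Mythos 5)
Your proof is correct and follows essentially the same route as the paper: size and running time are inherited from the unweighted analysis because Algorithm~\ref{ALG:new} permits an arbitrary edge ordering, and correctness is established via Lemma~\ref{lem:simple} together with the observation that nondecreasing weight order upgrades a hop-count bound on the surviving path to a weight bound.

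One small misstatement is worth flagging, though it does not break the argument. You write that when Algorithm~\ref{ALG:LBC} returns NO, Theorem~\ref{thm:LBC} gives that every length-$(2k-1)$-cut has size larger than $f\cdot(2k-1)$. That reverses the two sides of the gap guarantee. The specification of LBC$(t,\alpha)$ is: if some length-$t$-cut has size at most $\alpha$, the algorithm must return YES; if none has size at most $\alpha t$, it must return NO. The contrapositive of the first clause is what a NO answer actually certifies: there is \emph{no} length-$(2k-1)$-cut of size at most $\alpha = f$, i.e.\ every such cut has size strictly greater than $f$ --- not greater than $f(2k-1)$. (A concrete counterexample to your stronger claim: $f+1$ internally vertex-disjoint $u$--$v$ paths of length at most $2k-1$ force a NO, yet the minimum cut has size exactly $f+1$.) Since you immediately weaken your claim to ``in particular larger than $f$'' and use only that, the remainder of the argument stands as written; just correct the intermediate sentence.
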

\begin{proof}
The running time is directly from Theorem~\ref{thm:time}, since the only additional step in the algorithm is sorting the edges by weight, which takes only $O(m \log m)$ additional time.  The size also follows directly from Theorem~\ref{thm:size}, since Algorithm~\ref{ALG:weight} is just a particular instantiation of Algorithm~\ref{ALG:new} where the ordering (which is unspecified in Algorithm~\ref{ALG:new}) is determined by the weights.  In other words, Theorem~\ref{thm:size} holds for an \emph{arbitrary} order, so it certainly holds for the weight ordering.  

The more interesting part of this theorem is correctness: why does this algorithm return an $f$-VFT $(2k-1)$-spanner despite ignoring weights?  Let $F\subseteq V$ be an arbitrary fault set  with $|F|\leq f$ and $\{u,v\} \in E$ with $u,v \not\in F$ and $d_{G \setminus F}(u,v) = w(u,v)$.  By Lemma~\ref{lem:simple}, we just need to show that $d_{H \setminus F}(u,v) \leq (2k-1) w(u,v)$ in order to prove the theorem.  Clearly this is true if $\{u,v\} \in E(H)$.  So suppose that $\{u,v\} \not \in E(H)$.  Then when the algorithm considered $\{u,v\}$ it must have been the case that Algorithm~\ref{ALG:LBC} returned NO, and hence by Theorem~\ref{thm:LBC} every length-$(2k-1)$-cut in $H$ (unweighted) for $u,v$ has size larger than $f$ and so $F$ is not such a cut.  Thus at the time the algorithm was considering $\{u,v\}$, there was some path $P$ between $u$ and $v$ in $H \setminus F$ with at most $2k-1$ edges.  But since we considered edges in order of nondecreasing weight, every edge in $P$ has weight at most $w(u,v)$.  Thus 
\begin{align*}
d_{H \setminus F}(u,v) &\leq \sum_{e \in P} w(e) \leq \sum_{e \in P} w(u,v) = |P| w(u,v) \\
&\leq (2k-1) w(u,v),
\end{align*}
as required.
\end{proof}

\section{Distributed Algorithms}
In this section we give efficient randomized algorithms to compute fault-tolerant spanners of weighted graphs in two standard distributed models: the LOCAL model and the CONGEST model~\cite{peleg2000distributed}.  Recall that in both models we assume communication happens in synchronous rounds, and our goal is to minimize the number of rounds needed.  In the LOCAL model each node can send an arbitrary message on each incident edge in each round, while in the CONGEST model these messages must have size at most $O(\log n)$ bits (or $O(1)$ words, so we can send a constant number of node IDs and weights in each message).  Note that both models allow unlimited computation at each node, and hence the difficulty with applying the greedy algorithm is not the exponential running time, but its inherently sequential nature.

\subsection{LOCAL}
In the LOCAL model we will be able to implement the greedy algorithm at only a small extra cost in the size of the spanner.  Our approach is simple: we use standard network decompositions to decompose the graph into clusters, run the greedy algorithm in each cluster, and then take the union of the spanner for each cluster.  %The important property of network descompositions that we will take advantage of is that for every edge, there is a cluster which contains all nodes within distance $2k$ of the edge.  Thus if we build a fault-tolerant spanner for that cluster, it will be fault tolerant for that edge in the full graph.  

%For the rest of this subsection we assume that the graph is unweighted.  %For every edge $e = \{u,v\}$, let $N(u,v) = N(e) = \{w \in V : \min(d(w,u), d(w,v)) \leq t\}$ be the set of nodes within distance $t$ of $e$.  
The following theorem is a simple corollary of the construction of ``padded decompositions'' given explicitly in previous work on fault-tolerant spanners~\cite{DinitzK:11}.  It also appears implicitly in various forms in~\cite{LinialS93,Bartal96,miller2013,miller2015} (among others).  In what follows, the \emph{hop diameter} of a cluster refers to its \emph{unweighted} diameter.
\begin{theorem} \label{thm:decomposition}
There is an algorithm in the LOCAL model which runs in $O(\log n)$ rounds and constructs $P_1, P_2, \dots, P_{\ell}$ such that:
\begin{enumerate}
    \item Each $P_i$ is a partition of $V$, with each part of the partition referred to as a \emph{cluster}.  Let $\mathcal C = \cup_{i=1}^{\ell} P_i$ be the collection of all clusters of all $\ell$ partitions.  
    \item Each cluster has hop diameter at most $O(\log n)$ and contains some special node known as the \emph{cluster center}.
    \item $\ell = O(\log n)$ (there are $O(\log n)$ partitions).
    \item With high probability ($1 - 1/n^c$ for any constant $c$) for every edge $e \in E$ there is a cluster $C \in \mathcal C$ such that $e \subseteq C$.
\end{enumerate}
\end{theorem}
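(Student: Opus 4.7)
The plan is to obtain each $P_i$ independently from a Miller--Peng--Xu (MPX) style padded decomposition, and then take $\ell = \Theta(\log n)$ independent copies of the construction to amplify the padding guarantee from constant probability per edge to high probability. Since the LOCAL model allows unbounded message sizes, a single MPX-style partition can be computed in $O(\log n)$ rounds, matching the target hop-diameter bound on clusters.

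For a single partition I would have every vertex $v$ sample a shift $\delta_v$ from an exponential distribution with a small constant rate, truncated at some threshold $D = \Theta(\log n)$. Vertex $v$ is then assigned to the cluster whose center $u$ minimizes $d_G(u,v) - \delta_u$ with respect to hop distance, with ties broken by ID; the cluster center of each nonempty cluster is the vertex $u$ achieving the minimum. Because $\delta_u \leq D$ for every $u$, no vertex can be assigned to a center more than $D$ hops away, so every cluster has hop diameter at most $2D = O(\log n)$, giving item~(2). In LOCAL, each vertex gathers the shifts of every vertex within $D$ hops in $D$ rounds and then locally computes its assignment, which is enough because only those vertices can possibly be its minimizer; this gives the overall $O(\log n)$ round complexity.

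The analytical input is the MPX padding lemma: for any edge $\{u,v\}$, with constant probability $c>0$ the argmin defining the cluster of $u$ coincides with the one defining the cluster of $v$. This follows from the memoryless property of the exponential distribution, with the truncation at $D=\Theta(\log n)$ contributing only a $1/\mathrm{poly}(n)$ additive error. Running $\ell = \Theta(\log n)$ independent copies then drives the failure probability for any fixed edge down to at most $(1-c)^{\ell} \leq n^{-(c'+2)}$, and a union bound over the at most $n^2$ edges of $G$ yields item~(4); item~(3) is immediate from the number of repetitions.

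The main obstacle is the padding lemma itself, since the remaining pieces (hop diameter, number of partitions, round complexity) follow directly from the construction. For this I would cite the explicit version proved in~\cite{DinitzK:11} rather than redo it, since the paper already relies on that construction; if a self-contained proof were desired, I would reproduce the standard short argument using the memoryless property of the exponential distribution to bound, for an adjacent pair $u,v$, the probability that the second-smallest value of $d_G(\cdot, u) - \delta_{\cdot}$ exceeds the smallest by more than one.
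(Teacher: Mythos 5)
Your proposal is correct and matches the paper's approach: the paper does not reprove this theorem, but simply cites the padded/low-diameter decomposition constructions of \cite{DinitzK:11,LinialS93,Bartal96,miller2013,miller2015}, and your sketch is exactly the Miller--Peng--Xu exponential-shift construction from \cite{miller2013,miller2015}, amplified by $\Theta(\log n)$ independent repetitions. One small point you should make explicit when filling in the argument: the bound ``hop diameter at most $2D$'' is a statement about the diameter of the \emph{induced} subgraph $G[C]$, not about distances in $G$, and this requires the standard observation that MPX clusters are shortest-path closed (if $v$ is assigned to center $u$, every vertex on a shortest $u$--$v$ path in $G$ is also assigned to $u$, assuming consistent tie-breaking). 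Without that, nearness of every member to the center in $G$ does not by itself bound the diameter inside the cluster, which is what the LOCAL algorithm in Theorem~\ref{thm:local} needs in order to gather $G[C]$ at the center in $O(\log n)$ rounds. Also, for adjacent $u,v$ the slack you need between the smallest and second-smallest value of $d_G(\cdot,u)-\delta_\cdot$ is strictly more than $2$ (not $1$) to guarantee both endpoints share the same argmin; this only changes the constant in the per-edge success probability and does not affect the argument.
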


With this tool, it is easy to describe our algorithm.  First we use Theorem~\ref{thm:decomposition} to construct the partitions.  Then in each cluster $C$ we gather at the cluster center the entire subgraph $G[C]$ induced by that cluster.  Each cluster center uses the greedy algorithm (Algorithm~\ref{ALG:old}) on $G[C]$ to construct an $f$-VFT $(2k-1)$-spanner $H_C$ of $G[C]$, and then sends out the selected edges to the nodes in $C$.  Let $H$ be the final subgraph created (the union of the edges of each $H_C$)

\begin{theorem} \label{thm:local}
With high probability, $H$ is an $f$-VFT $(2k-1)$-spanner of $G$ with at most $O\left(f^{1-1/k} n^{1+1/k} \log n\right)$ edges and the algorithm terminates in $O(\log n)$ rounds.
\end{theorem}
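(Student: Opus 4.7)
The plan is to verify each of the three claims --- round complexity, size bound, and spanner property --- in turn, using the properties of the network decomposition from Theorem~\ref{thm:decomposition}.

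For the round complexity, Theorem~\ref{thm:decomposition} produces all partitions in $O(\log n)$ rounds. Each cluster $C$ has hop diameter $O(\log n)$, so in the LOCAL model (with unbounded message sizes) the entire induced subgraph $G[C]$ can be collected at the cluster center in $O(\log n)$ rounds by standard flooding, and the list of selected edges can be broadcast back in another $O(\log n)$ rounds. Local computation is free, so the total is $O(\log n)$.

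For the size bound, the greedy algorithm run on $G[C]$ returns a spanner $H_C$ of $G[C]$ with at most $O(f^{1-1/k} |C|^{1+1/k})$ edges, by the optimal size bound of~\cite{BP19}. Within a single partition $P_i$, the clusters are disjoint and $\sum_{C \in P_i} |C| = n$, so using $|C|^{1+1/k} \leq n^{1/k} |C|$ gives $\sum_{C \in P_i} |E(H_C)| = O(f^{1-1/k} n^{1+1/k})$. Summing over the $\ell = O(\log n)$ partitions yields the claimed bound of $O(f^{1-1/k} n^{1+1/k} \log n)$.

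Correctness follows from Lemma~\ref{lem:simple}: it suffices to show that for every fault set $F \subseteq V$ with $|F| \leq f$ and every edge $\{u,v\} \in E$ with $u,v \notin F$, we have $d_{H \setminus F}(u,v) \leq (2k-1)\, w(u,v)$. By part~4 of Theorem~\ref{thm:decomposition}, with high probability some cluster $C \in \mathcal{C}$ contains both endpoints of $\{u,v\}$. Since $\{u,v\}$ is an edge of $G[C]$ and neither endpoint lies in $F$, the edge itself survives in $G[C] \setminus F$, so $d_{G[C] \setminus F}(u,v) \leq w(u,v)$. Because $H_C$ is an $f$-VFT $(2k-1)$-spanner of $G[C]$, this yields $d_{H_C \setminus F}(u,v) \leq (2k-1)\, w(u,v)$, and since $H_C \subseteq H$ the same bound holds in $H$.

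The only real subtlety is the conceptual step of reducing the global spanner property to the per-edge condition afforded by Lemma~\ref{lem:simple}: once one sees that each per-cluster spanner $H_C$ already certifies the stretch for every edge it contains, the edge-covering guarantee of the decomposition (Theorem~\ref{thm:decomposition} part~4) delivers correctness globally. Nothing beyond combining these ingredients is required.
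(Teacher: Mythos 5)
Your proposal is correct and follows essentially the same three-part structure and argument as the paper's own proof: round complexity from the decomposition bounds, size by summing the per-cluster greedy bound of~\cite{BP19} over each partition, and correctness via Lemma~\ref{lem:simple} combined with the edge-covering guarantee of Theorem~\ref{thm:decomposition}. The only cosmetic difference is that the paper explicitly replaces $F$ with $F_C = F \cap C$ when invoking the fault-tolerance of $H_C$, whereas you use $F$ directly; since $H_C$ and $G[C]$ contain only vertices of $C$, this is the same thing, so no gap arises.
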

\begin{proof}
The round complexity is obvious from the round complexity and cluster hop diameter bounds in Theorem~\ref{thm:decomposition}.

The total number of edges added is at most
\begin{align*}
    \sum_{i=1}^{\ell} \sum_{C \in P_i} |E(H_C)| &\leq \sum_{i=1}^{\ell} \sum_{C \in P_i} f^{1-1/k} |V(H_C)|^{1+1/k} \\
    &= f^{1-1/k} \sum_{i=1}^{\ell} \sum_{C \in P_i} |C|^{1+1/k} \\
    &\leq f^{1-1/k} \sum_{i=1}^{\ell} n^{1+1/k}\\
    &= O\left(f^{1-1/k} n^{1+1/k} \log n\right),
\end{align*}
where we used the size bound on the greedy algorithm from~\cite{BP19} and the fact from Theorem~\ref{thm:decomposition} that each $P_i$ is a partition of $V$.

To show correctness, consider some $\{u,v\} \in E$ and $F \subseteq V$ with $|F| \leq f$ and $u,v\not\in F$ so that $d_{G \setminus F}(u,v) = w(u,v)$.  By Lemma~\ref{lem:simple}, we just need to prove that $d_{H \setminus F}(u,v) \leq (2k-1) w(u,v)$.  Let $C \in \mathcal C$ be a cluster which contains both $u$ and $v$, which we know exists (with high probability) from Theorem~\ref{thm:decomposition}. Let $F_C = F \cap C$.  Then
\begin{align*}
    d_{H \setminus F}(u,v) &\leq d_{H_C \setminus F_C}(u,v) \\
    & \leq (2k-1) \cdot d_{G[C] \setminus F_C}(u,v) & \text{(definition of $H_C$)} \\
    & \leq (2k-1) \cdot w(u,v) & (\{u,v\} \in E(G[C] \setminus F_C) )
\end{align*}
Thus $H$ is indeed an $f$-VFT $(2k-1)$-spanner of $G$.  
\end{proof}

\subsection{CONGEST}
We unfortunately cannot use the approach that we used in the LOCAL model in the CONGEST model, since we cannot efficiently gather the entire topology of a cluster at a single node.  We will instead use the fault-tolerant spanner of Dinitz and Krauthgamer~\cite{DinitzK:11}, rather than the greedy algorithm, and combine it with the non-fault-tolerant spanner of~\cite{BaswanaS:07} which can be efficiently constructed in CONGEST.  This approach means that, unlike in the centralized setting or the LOCAL model, we will not be able to get size-optimal fault-tolerant spanners.

The algorithm of~\cite{DinitzK:11} works as follows (in the traditional centralized model).  Suppose that we have some algorithm $\mathcal A$ which constructs a $(2k-1)$-spanner with at most $g(n)$ edges on any graph with $n$ nodes.  The algorithm of~\cite{DinitzK:11} consists of $O(f^3 \log n)$ iterations, and in each iteration every node chooses to participate independently with probability $1/f$.  For each $i \in O(f^3 \log n)$, let $V_i$ be the vertices who participate and let $G_i$ be the subgraph of $G$ induced by them.  We let $H_i$ be the $(2k-1)$-spanner constructed by $\mathcal A$ on $G_i$.  Then we return the union of all $H_i$.

The main theorem that~\cite{DinitzK:11} proved about this is the following.

\begin{theorem}[\cite{DinitzK:11}] \label{thm:DK}
This algorithm returns an $f$-VFT $(2k-1)$-spanner of $G$ with $O\left(f^3 g((2n)/f) \log n\right)$ edges with high probability. 
\end{theorem}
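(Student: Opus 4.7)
The plan is to prove both the size and correctness bounds by analyzing the random sampling, and then combine with a union bound over all potential failure events.

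For the size bound, in each iteration $i$ the sample $V_i$ has expected size $n/f$, so by a standard Chernoff bound $|V_i| \leq 2n/f$ with probability at least $1 - 1/n^{c+1}$ for suitable constant (we can choose the constant in $O(f^3 \log n)$ to accommodate this). On this event, $|E(H_i)| \leq g(|V_i|) \leq g(2n/f)$ (using monotonicity of $g$, which follows since we can always pad with isolated vertices). A union bound over the $O(f^3 \log n)$ iterations gives that $|E(H)| \leq \sum_i |E(H_i)| \leq O(f^3 g(2n/f) \log n)$ with high probability.

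For correctness, I would first invoke Lemma~\ref{lem:simple} to reduce to proving $d_{H \setminus F}(u,v) \leq (2k-1) w(u,v)$ only for $F \subseteq V$ with $|F| \leq f$ and $\{u,v\} \in E$ satisfying $u,v \notin F$ and $d_{G \setminus F}(u,v) = w(u,v)$. Call iteration $i$ \emph{good} for the pair $(F, \{u,v\})$ if $u,v \in V_i$ and $V_i \cap F = \emptyset$. The probability of this in a single iteration is
\[
\frac{1}{f^2} \left(1 - \frac{1}{f}\right)^{|F|} \geq \frac{1}{f^2}\left(1 - \frac{1}{f}\right)^{f} \geq \frac{1}{4f^2}
\]
(for $f \geq 2$; the $f=1$ case is easy). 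Since iterations are independent, the probability that no iteration among the $O(f^3 \log n)$ rounds is good is at most $(1 - 1/(4f^2))^{\Omega(f^3 \log n)} \leq e^{-\Omega(f \log n)} = n^{-\Omega(f)}$.

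Next I would union-bound over all relevant pairs $(F, \{u,v\})$. There are at most $m \binom{n}{f} \leq n^{f+2}$ such pairs, so choosing a large enough constant in the $O(f^3 \log n)$ iteration count makes the failure probability at most $1/n^c$ for any desired constant $c$. Conditioned on a good iteration $i$ existing, $\{u,v\}$ is an edge of $G_i = G[V_i]$ and $F \cap V_i = \emptyset$, so $H_i$ is a $(2k-1)$-spanner of $G_i$ with $V(H_i) \cap F = \emptyset$. Therefore
\[
d_{H \setminus F}(u,v) \leq d_{H_i}(u,v) \leq (2k-1) \cdot d_{G_i}(u,v) \leq (2k-1) \cdot w(u,v),
\]
using $H_i \subseteq H$ (and $V(H_i) \cap F = \emptyset$) for the first inequality and the fact that $\{u,v\}$ is itself an edge in $G_i$ for the last. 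A final union bound combines the size and correctness events.

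The main obstacle is balancing the two probability budgets: the number of iterations $O(f^3 \log n)$ must be large enough that, per pair $(F,\{u,v\})$, the probability of having no good iteration is at most $n^{-(f+c)}$ so the union bound over $n^{f+2}$ pairs succeeds, while the $f^2$ denominator in the probability calculation forces the $f^3$ factor in the iteration count. Carrying out these constants carefully (and verifying that the $(1-1/f)^f \geq 1/4$ bound holds in the relevant regime) is the only real subtlety; everything else is bookkeeping.
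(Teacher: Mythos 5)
The paper does not actually prove Theorem~\ref{thm:DK}; it is quoted directly from \cite{DinitzK:11} and used as a black box, so there is no internal proof to compare against. That said, your argument is the natural one and it checks out. For the size bound, Chernoff plus a union bound over the $O(f^3\log n)$ iterations gives $|V_i| \le 2n/f$ simultaneously, and monotonicity of $g$ finishes it. For correctness, reducing to single-edge pairs via Lemma~\ref{lem:simple} is legitimate (that lemma is a general fact about fault-tolerant spanners, not specific to the greedy construction), the probability that a fixed $(F,\{u,v\})$ sees a ``good'' iteration is indeed $\frac{1}{f^2}(1-1/f)^{|F|} \ge \frac{1}{4f^2}$ for $f\ge 2$ since the events are independent over disjoint vertex sets, and with $\Theta(f^3\log n)$ iterations the per-pair failure probability is $n^{-\Omega(f)}$, which absorbs the $\le n^{f+O(1)}$ union bound with a suitable constant. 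The final step $d_{H\setminus F}(u,v)\le d_{H_i}(u,v)\le (2k-1)\,d_{G_i}(u,v)\le (2k-1)\,w(u,v)$, using that $H_i\subseteq H\setminus F$ because $V_i\cap F=\emptyset$, is correct.

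One loose end: you dismiss $f=1$ as ``easy,'' but with participation probability $1/f=1$ the algorithm as described degenerates (every $V_i=V$, so $H$ is just a non-fault-tolerant spanner). The original \cite{DinitzK:11} presumably samples with a slightly different probability or treats $f=1$ separately; your proof inherits that unaddressed edge case. Also, the $(1-1/f)^f\ge 1/4$ bound you rely on holds because $(1-1/f)^f$ is increasing in $f$ with value exactly $1/4$ at $f=2$, which you should state rather than just assert. Otherwise the proof is sound.
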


Note that when $g(n) = n^{1+1/k}$, this results in an $f$-VFT $(2k-1)$-spanner with at most $O(f^{2-1/k} n^{1+1/k} \log n)$, which is precisely the bound from~\cite{DinitzK:11}.

Since the algorithm of~\cite{DinitzK:11} uses an \emph{arbitrary} non-fault-tolerant spanner algorithm $\mathcal A$, by using a \emph{distributed} spanner algorithm for $\mathcal A$ we naturally end up with a distributed fault-tolerant spanner algorithm.  In particular, we will combine the algorithm of~\cite{DinitzK:11} with the following algorithm due to Baswana and Sen~\cite{BaswanaS:07}.

\begin{theorem}[\cite{BaswanaS:07}] \label{thm:BS}
There is an algorithm that computes a $(2k-1)$-spanner with at most $O(k n^{1+1/k})$ edges of any weighted graph in $O(k^2)$ rounds in the CONGEST model.
\end{theorem}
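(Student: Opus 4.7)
The plan is to prove Theorem~\ref{thm:BS} by reproducing the classical Baswana--Sen construction, which builds a hierarchy of clusterings $\mathcal C_0, \mathcal C_1, \dots, \mathcal C_{k-1}$ over $k$ phases. I would initialize $\mathcal C_0$ as the partition of $V$ into singletons, and in phase $i \in \{1, \dots, k-1\}$ sample each cluster of $\mathcal C_{i-1}$ independently with probability $n^{-1/k}$ to form $\mathcal C_i$. For each vertex $v$ whose phase-$(i-1)$ cluster is unsampled, I would inspect, in increasing weight order, its minimum-weight edge $e_C$ to every cluster $C \in \mathcal C_{i-1}$ that contains a neighbor of $v$. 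If no such $C$ is sampled, $v$ adds $e_C$ to the spanner for every such $C$ and drops out. Otherwise $v$ joins the nearest sampled cluster $C^*$ via the corresponding edge $e_{C^*}$, and additionally adds $e_C$ for every $C$ with $w(e_C) < w(e_{C^*})$. A final closing step then connects each vertex still clustered at level $k-1$ via one minimum-weight edge per neighboring cluster.

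For correctness I would prove by induction on $i$ that after phase $i$ every surviving cluster has hop-radius at most $i$ in the spanner $H$, so that any edge $(u,v)$ dropped when $u$'s cluster becomes unsampled in phase $i$ admits a $u$-to-$v$ path in $H$ of weight at most $(2i-1)\,w(u,v)$, yielding the stretch bound $2k-1$ after the closing phase. The sparsity bound follows from a per-phase argument: enumerating $v$'s neighbor-clusters in order of increasing edge weight, the ``prefix before the first sampled cluster'' has expected length $O(n^{1/k})$ since each cluster is sampled independently with probability $n^{-1/k}$. Summed over all vertices and all $k$ phases this yields expected spanner size $O(k n^{1+1/k})$.

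The main obstacle, and really the whole point of the theorem, is implementing each phase in CONGEST within $O(k)$ rounds despite the $O(\log n)$ message-size bound. The two primitives that must be executed per phase are broadcasting each cluster's sampling bit to its members, and letting every vertex $v$ learn, per adjacent cluster, the identity and weight of the minimum-weight edge to it. Both can be carried out along the BFS tree inside each cluster, whose hop-depth is at most $i \leq k$ at phase $i$; per edge per round only a single cluster ID and a single edge weight need be transmitted, which fits in $O(\log n)$ bits. Summing $O(k)$ rounds over $k$ phases gives the stated $O(k^2)$ round complexity.
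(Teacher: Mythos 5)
The paper cites this result from Baswana and Sen~\cite{BaswanaS:07} without reproducing its proof, so there is no internal argument to compare against. Your reconstruction is a faithful account of the Baswana--Sen clustering scheme and its CONGEST implementation: the $k$-phase hierarchy with per-cluster sampling probability $n^{-1/k}$, the ``add the min-weight edge to every closer unsampled cluster and then join the nearest sampled one'' rule, the closing phase, the geometric-tail argument giving $O(n^{1/k})$ expected edges per vertex per phase (hence $O(kn^{1+1/k})$ total), and the per-phase broadcast/convergecast over cluster trees of hop-depth $O(i)$ giving $\sum_{i\le k} O(i) = O(k^2)$ rounds within the $O(\log n)$-bit message budget. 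You do gloss the stretch induction: the precise invariant one maintains is a radius bound on the cluster spanning trees kept inside $H$, which, together with the monotonicity that a vertex joins its level-$i$ cluster via an edge no heavier than any edge it subsequently discards, yields the $(2i-1)\,w(u,v)$ replacement-path bound you assert. That is a matter of level of detail rather than a missing idea; the overall shape of your sketch matches the cited proof.
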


Combining Theorems~\ref{thm:DK} and \ref{thm:BS} immediately gives an algorithm in CONGEST that returns an $f$-VFT $(2k-1)$-spanner of size at most $O(k f^{2-1/k} n^{1+1/k})$ that runs in at most $O(k^2 f^3 \log n)$ rounds (with high probability).  We can just run each iteration of the Dinitz-Krauthgamer algorithm~\cite{DinitzK:11} in series, and in each iteration we use the Baswana-Sen algorithm~\cite{BaswanaS:07}.  Since there are $O(f^3 \log n)$ iterations, and Baswana-Sen takes $O(k^2)$ rounds, this gives a total round complexity of $O(k^2 f^3 \log n)$.  

We can improve on this bound by taking advantage of the fact that each iteration of Dinitz-Krauthgamer runs on a relatively small graph (approximately $n/f$ nodes), so we can run some of these iterations in parallel.
\begin{theorem} \label{thm:congest}
There is an algorithm that computes an $f$-VFT $(2k-1)$-spanner of $G$ with $O\left(k f^{2-1/k} n^{1+1/k} \log n\right)$ edges of any weighted graph and which runs in $O(f^2(\log f + \log \log n) + k^2 f \log n)$ rounds in the CONGEST model (all with high probability).
\end{theorem}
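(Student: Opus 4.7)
The plan is to parallelize the iterations of the Dinitz--Krauthgamer construction (Theorem~\ref{thm:DK}), using Baswana--Sen (Theorem~\ref{thm:BS}) as the non-fault-tolerant subroutine $\mathcal{A}$, but scheduled so that CONGEST bandwidth constraints are respected. Specifically, I would partition the $O(f^3 \log n)$ DK iterations into $O(f)$ consecutive \emph{batches}, each consisting of $q := \Theta(f^2 \log n)$ iterations, and execute all iterations within a single batch simultaneously on the underlying network.

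Correctness and the size bound come essentially for free. Correctness holds because scheduling does not affect the output: the returned subgraph is still the union of the $H_i$ and is therefore an $f$-VFT $(2k-1)$-spanner with high probability by Theorem~\ref{thm:DK}. The size bound follows by plugging $g(n) = O(k n^{1+1/k})$ from Theorem~\ref{thm:BS} into Theorem~\ref{thm:DK}:
\[
O\bigl(f^3 \cdot k (2n/f)^{1+1/k} \log n\bigr) = O\bigl(k f^{2-1/k} n^{1+1/k} \log n\bigr).
\]

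The main work is bounding the round complexity of one batch, which I would split into two phases. In the \emph{membership phase}, every node informs each neighbor of which of the $q$ iterations in the current batch it participates in. Since each node joins each iteration independently with probability $1/f$, a Chernoff bound shows that with high probability every node participates in at most $O(q/f) = O(f \log n)$ iterations of the batch; transmitting that many iteration indices of $O(\log q) = O(\log f + \log \log n)$ bits each across every incident edge takes $O(f(\log f + \log \log n))$ CONGEST rounds. In the \emph{computation phase}, all $q$ instances of Baswana--Sen are simulated in parallel. The congestion on any edge $\{u,v\} \in E$ is the number of iterations in the batch in which both endpoints participate, which is a sum of $q$ independent Bernoullis with mean $q/f^2 = \Theta(\log n)$; a Chernoff bound together with a union bound over all edges yields maximum edge congestion $O(\log n)$ with high probability, so simulating a $O(k^2)$-round algorithm in parallel takes $O(k^2 \log n)$ rounds per batch.

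Summing the two phases gives $O(f(\log f + \log \log n) + k^2 \log n)$ rounds per batch, and multiplying by $O(f)$ batches yields the claimed bound $O(f^2(\log f + \log \log n) + k^2 f \log n)$. The main obstacle I anticipate is making precise the claim that multiple Baswana--Sen instances can be multiplexed cleanly on a shared edge: each instance operates on its own sampled induced subgraph, and whenever it wants to send a message of $O(\log n)$ bits, tagging that message with its iteration index (of $O(\log q) = O(\log n)$ additional bits) lets the recipient demultiplex to the correct simulated instance, so the membership phase provides all the structural knowledge each instance needs. Standard union bounds suffice to amalgamate the "with high probability" guarantees across the polynomial number of nodes, edges, and iterations.
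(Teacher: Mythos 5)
Your proposal is correct and follows essentially the same approach as the paper: both combine the Dinitz--Krauthgamer reduction with Baswana--Sen, use Chernoff bounds to limit per-node participation and per-edge congestion, and tag messages with iteration indices to demultiplex the parallel simulations. The only difference is a scheduling one --- the paper does a single global membership phase and then runs all $O(f^3\log n)$ Baswana--Sen instances simultaneously with per-edge congestion $O(f\log n)$, whereas you batch the iterations into $O(f)$ groups each with congestion $O(\log n)$ --- but this yields an identical total round bound.
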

\begin{proof}
In the first phase of the algorithm each vertex randomly selects which of the $O(f^3 \log n)$ iterations in which to participate by choosing each iteration independently with probability $1/f$.  So by a Chernoff bound, with high probability every node picks $O(f^2 \log n)$ iterations in which to participate.  Then each vertex sends its chosen iterations to all of its neighbors.  Identifying these iterations take $O(f^2 \log n \cdot \log(f^3 \log n)) = O(f^2 \log n \cdot (\log f + \log\log n))$ bits, and thus $O(f^2 (\log f + \log\log n))$ rounds in CONGEST.  

After this has completed we enter the second phase of the algorithm, and now every node knows which iterations it is participating in and which iterations each of its neighbors is participating in.  With high probability (by a simple Chernoff bound), for every edge there are at most $O(f \log n)$ iterations in which \emph{both} endpoints participate.  Thus if we try to run all $O(f^3 \log n)$ iterations of Baswana-Sen (Theorem~\ref{thm:BS}) in parallel, we have ``congestion'' of $O(f \log n)$ on each edge (at each time step) since there could be up to that many iterations in which a message is supposed to be sent along that edge at that time.  Thus we can simply use $O(f \log n)$ time steps for each time step of Baswana-Sen and can simulate all $O(f^3 \log n)$ iterations of the Dinitz-Krauthgamer algorithm (note that each Baswana-Sen message needs to have a tag added to it with the iteration number, but since that takes at most $O(\log(f^3 \log n)) = O(\log f + \log\log n) \leq O(\log n)$ bits it fits within the required message size).  Hence the total running time of this second phase is at most $O(k^2 f \log n)$.

The size and correctness bounds are direct from Theorems~\ref{thm:DK} and \ref{thm:BS}, and the round complexity is from our analysis of the two phases above.
\end{proof}

\section{Conclusion and Future Work}
In this paper we designed an algorithm to compute nearly-optimal fault-tolerant spanners in polynomial time, answering a question posed by~\cite{BDPW18,BP19}.  We also gave an optimal construction in the LOCAL model which runs in $O(\log n)$ rounds, and an efficient algorithm in the CONGEST model that constructs fault-tolerant spanners which have the same size as in~\cite{DinitzK:11} rather than the optimal size.

There are many interesting open questions remaining about efficient algorithms for fault-tolerant spanners, as well as about the extremal properties of these spanners.  Most obviously, the size we achieve is a factor of $k$ away from the optimal size, due to our use of an $O(k)$-approximation for \textsc{Length-Bounded Cut}.  Can this be removed, either by giving a better approximation for \textsc{Length-Bounded Cut} or through some other construction?  While $k$ is somewhat small since spanners tend to be most useful for constant stretch (and never have stretch larger than $O(\log n)$), it would still be nice to get fully optimal size in polynomial time.  Similarly, our distributed constructions are extremely simple, and there is no reason to think that we actually need $\Omega(\log n)$ rounds in LOCAL or that we cannot get optimal size fault-tolerant spanners in CONGEST.  It would be interesting to design better distributed and parallel algorithms for these objects, particularly since the greedy algorithm (the only size-optimal algorithm we know) tends to be difficult to parallelize. 

From a structural point of view, we reiterate one of the main open questions from~\cite{BDPW18} and~\cite{BP19}: understanding the optimal bounds for \emph{edge}-fault-tolerant spanners.  The best upper bound we have is the same $O(f^{1-1/k} n^{1+1/k})$ that we have for the vertex case, while the best lower bound is $\Omega(f^{\frac12 (1-1/k)} n^{1+1/k})$ (from~\cite{BDPW18}).  What is the correct bound?

%\clearpage
\ifarxiv
\bibliographystyle{alpha}
\else
\bibliographystyle{ACM-Reference-Format}
\fi
\bibliography{references.bib}
\end{document}